\documentclass[11p,reqno]{amsart}

\topmargin=0cm\textheight=22cm\textwidth=15cm
\oddsidemargin=0.5cm\evensidemargin=0.5cm
\setlength{\marginparwidth}{2cm}
\usepackage[T1]{fontenc}
\usepackage{graphicx}
\usepackage{amssymb,amsthm,amsmath,mathrsfs,bm,braket,marginnote}
\usepackage{enumerate}
\usepackage{appendix}
\usepackage[colorlinks=true, pdfstartview=FitV, linkcolor=blue, citecolor=blue, urlcolor=blue]{hyperref}
\usepackage{multirow}

\usepackage{pgf}
\usepackage{pgfplots}
\usepackage{tikz}
\usetikzlibrary{arrows,calc}
\usepackage{verbatim}
\usetikzlibrary{decorations.pathreplacing,decorations.pathmorphing}
\usepackage[numbers,sort&compress]{natbib}
\usepackage{dsfont}

\numberwithin{equation}{section}
\linespread{1.2}
\newtheorem{theorem}{Theorem}[section]

\newtheorem{coro}[theorem]{Corollary}

\newtheorem{proposition}[theorem]{Proposition}

\theoremstyle{remark}
\newtheorem{remark}[theorem]{Remark}

 \reversemarginpar

\newcommand{\lp}{e^{s x}}
\newcommand{\ma}{\mathcal{A}}
\newcommand{\mb}{\mathcal{B}}

\newcommand{\mt}{\mathcal{T}}
\newcommand{\mf}{\mathcal{F}}

  \newcommand{\floor}[1]{\lfloor #1 \rfloor}
  
\begin{document}

\title[$q$-Pearson pair and $q$-deformed ensemble]{$q$-Pearson pair and moments in $q$-deformed ensembles}

\subjclass[2020]{15B52, 15A15, 33E20}
\date{}

\author{Peter J. Forrester}
\address{School of Mathematical and Statistics, ARC Centre of Excellence for Mathematical and Statistical Frontiers, The University of Melbourne, Victoria 3010, Australia}
\email{pjforr@unimelb.edu.au}

\author{Shi-Hao Li}
\address{Department of Mathematics, Sichuan University, Chengdu, 610064, China}
\email{lishihao@lsec.cc.ac.cn}

\author{Bo-Jian Shen}
\address{School of Mathematical Sciences, Shanghai Jiaotong University, People's Republic of China.}
\email{JOHN-EINSTEIN@sjtu.edu.cn}

\author{Guo-Fu Yu}
\address{School of Mathematical Sciences, Shanghai Jiaotong University, People's Republic of China.}
\email{gfyu@sjtu.edu.cn}

\dedicatory{}

\keywords{ $q$-Pearson pair, Schur polynomials, Askey scheme, orthogonal polynomials, $q$-moments in random matrices}

\begin{abstract}
The generalisation of continuous orthogonal polynomial ensembles from random matrix theory to the $q$-lattice setting is considered. We take up the task of initiating a systematic study of the corresponding moments of the density from two complementary viewpoints. The first requires knowledge of the ensemble average with respect to a general Schur polynomial, from which the spectral moments follow as a corollary. In the case of little $q$-Laguerre weight, a particular ${}_3 \phi_2$ basic hypergeometric polynomial is used to express density moments. The second approach is to study the $q$-Laplace transform of the un-normalised measure. Using integrability properties associated with the $q$-Pearson equation for the $q$-classical weights, a fourth order $q$-difference equation is obtained, generalising a result of Ledoux in the continuous classical cases.
\end{abstract}

\maketitle

\section{Introduction}

\subsection{Special properties of moments for classical ensembles with unitary symmetry}
An ensemble of $N\times N$ complex Hermitian random matrices $\{ H \}$ is said to have
unitary symmetry if the probability density function (PDF) has the invariance
property $P( U^{-1} H U) = P(H)$ for all unitary matrices $U\in\mathcal{U}(N)$. A structurally simple example
is when, up to proportionality, $P(H)$ is given by $\prod_{l=1}^N w(x_l)$, where
$\{ x_l \}_{l=1}^N$ are the eigenvalues of $ H $ and $w(x)$ is referred to as the weight function.
In this circumstance, up to normalisation, the corresponding eigenvalue PDF is given
by (see \cite[Prop.~1.3.4]{forrester10})
\begin{equation}\label{1.1}
\prod_{l=1}^N w(x_l) \prod_{1 \le j < k \le N} | x_k - x_j |^2.
\end{equation}

The class of PDFs of the form (\ref{1.1}) are closely related to orthogonal polynomials. Thus let
$\{ p_k(x) \}_{k=0,1,\dots}$, where each $p_k(x)$ is a monic polynomial of degree
$k$, have the orthogonality
\begin{equation}\label{1.2}
\int_I w(x) p_j(x) p_k(x) \, dx = h_j \delta_{j,k},
\end{equation}
where $I$ denotes the support of $w(x)$, and $h_j > 0$ is the normalisation.
It is a standard result in random matrix theory (see \cite[Prop.~5.1.2]{forrester10})
that 
\begin{equation}\label{1.2a}
\rho_{(k), N}(x_1,\dots, x_k) = \det  [ K_N(x_j, x_l) ]_{j,l=1}^k,
\end{equation}
where
\begin{equation}\label{1.2b}
K_N(x, y) = \Big ( w(x) w(y) \Big )^{1/2} \sum_{j=0}^{N-1} {p_j(x) p_j(y) \over h_j}.
\end{equation}
Here $\rho_{(k), N}$ denotes the $k$-point correlation function, obtained by integrating
a suitably normalised version of (\ref{1.1}) over all variables except $x_1,\dots, x_k$.
In the case $k=1$ this corresponds to the eigenvalue density, and we read off from
(\ref{1.2a}) and (\ref{1.2b}) that
\begin{equation}\label{1.2c}
\rho_{(1), N}(x) = w(x) \sum_{j=0}^{N-1} {( p_j(x))^2 \over h_j}.
\end{equation}

A distinguished class of weight functions have the property that their logarithmic derivative 
can be written in the particular rational function form
\begin{equation}\label{1.1a}
\frac{d}{dx}\log w(x) = {a_0 + a_1 x \over b_0 + b_1 x + b_2 x^2},
\end{equation}
for some polynomials of degree less than or equal to one in the numerator
and two in the  denominator, as indicated. Requiring too that $w(x)$ has all moments
finite leaves just three possibilities, up to linear scalings,
\begin{equation}\label{1.1b}
w^{(G)}(x) = e^{- x^2}, \qquad w^{(L)}(x) =  x^a e^{-x} \chi_{x > 0}, \qquad w^{(J)}(x) = x^a (1 - x)^b
\chi_{0 < x < 1},
\end{equation}
referred to as the Gaussian, Laguerre and Jacobi weights respectively. The notation
$\chi_A$ used in (\ref{1.1b}) is the indicator function of the condition $A$, which takes
on the value one for $A$ true, and zero otherwise. For the parameters in the Laguerre and
Jacobi case we require $a,b > -1$ for the weights to be normalisable.
Note that the corresponding orthogonal
polynomials in (\ref{1.2}) are, up to normalisation,
 the classical Hermite, Laguerre and Jacobi polynomials \cite{Sz75}. The corresponding
 (classical) random matrix ensembles
 with eigenvalue PDF (\ref{1.1}) are given the names GUE, LUE and JUE respectively, with the
 first letter corresponding to the weight, and UE denoting unitary ensemble. Here unitary is
 used in the context of the symmetry noted in the first paragraph.
 
 For a non-negative integer $ k $, the moments associated with the spectral density are defined by
 \begin{equation}\label{1.1c}
 m_{k,N} = \int_I x^k \rho_{(1), N}(x) \, dx.
 \end{equation}
 There are special properties associated with the moments
 in the case of the classical ensembles.
 Consider for example the GUE.
 It is a long established result in the applications of random matrices, due to
 Harer and Zagier \cite{HZ86}, that  the moments (\ref{1.1c}) have combinatorial and topological
 significance. This come about through the large $N$ terminating expansion
  \begin{equation}\label{1.5}
  {2^k \over N^{1 + 2k}} m_{2k,N}^{(G)} = \sum_{g=0}^{[k/2]}
  {c(g;k) \over N^{2g}}.
  \end{equation}
  It was shown in  \cite{HZ86} that the coefficients $c(g;k)$ can be specified as the number of pairings of the 
  edges of a $2k$-gon giving a figure that can be embedded on a surface of genus $g$.
  The leading coefficient in (\ref{1.5}) is given by
  \begin{equation}\label{1.5a}  
  c(0;k) = {1 \over k + 1} \binom{2k}{k},
    \end{equation}
    which is the $k$-th Catalan number.
    
    An analogous expansion  to (\ref{1.5}) holds for the LUE
    with Laguerre parameter $a = \alpha N$ \cite{Di02,CDO21}.
    The latter of these references involves so-called double monotone Hurwitz
    numbers, which have both combinatorial and topological significance.
     The recent work \cite{GGR21} considers the coefficients in the $1/N^2$
    expansion of the moments for the ensemble (\ref{1.1}) in the case of the JUE with
    Jacobi parameters $a = \alpha_1 N$,
    $b = \alpha_2 N$. Unlike the Gaussian and Laguerre cases, this expansion no longer
    terminates. It is shown that the coefficients can be expressed in terms of
  triple monotone Hurwitz numbers. 
  
  Another significant feature of the moments of the GUE and LUE
   is that they permit evaluations in terms of hypergeometric polynomials.
  Thus for the GUE \cite[Eq.~(4.33)]{WF14}
  \begin{equation}\label{1.5b}   
  {2^{2k} \over N} \int_{-\infty}^\infty |x|^{2k} \rho_{(1),N}^{(G)}(x) \, dx =
  {\Gamma(2k+1) \over \Gamma(k+1)} \, {}_2 F_1 \Big ({-k, 1-N \atop 2} \Big | 2 \Big).
   \end{equation}
   This result in fact remains valid for complex moments Re$\, k > -1/2$ \cite{cunden19}.
   In the case of the LUE, it is known \cite[Eq.~4.11]{cunden19}
     \begin{equation}\label{1.5c}  
     m_{k,N}^{(L)} = N (N + a) {(k+a)! \over (1 + a)!} \,
     {}_3 F_2 \Big ({1-k, 2+k, 1-N \atop 2, 2+a} \Big | 1 \Big ),
     \end{equation}
     which also extends to continuous $k$.
     
     \subsection{$q$-classical ensembles with unitary symmetry and their moments}\label{S1.2}
     It is well known that the classical polynomials admit $q$-generalisations catalogued
     according to the Askey scheme; see e.g.~\cite{ismail05,koekoek10}.
     Distinguishing the weight functions for classical $q$-orthogonal
   polynomials is that they satisfy the $q$-Pearson equation 
  \begin{equation}\label{16b}
  D_q(\sigma(x) w(x))=\tau(x) w(x), \qquad  D_q  f(x) = {f(x) - f(q x) \over (1 - q) x},
  \end{equation}  
  where it is required that $\sigma(x)$ be a polynomial
   of degree less than or equal to 2, and $\tau(x)$ be a polynomial of degree less than or
   equal to 1. Note that  with $D_q$ replaced by a derivative with respect
   to $x$, (\ref{16b}) is equivalent to (\ref{1.1a}). Our interest in the present article is
   in the moments of the density corresponding to the probability density function
   (\ref{1.1}) with $w(x)$ being given by a classical $q$-weight, and the support of the measure
   being appropriately chosen.
   
   It would seem
   that  the first  study of this type \cite{On81,AO84} was for the particular deformation of (\ref{1.1})
   from the real line to the unit circle with eigenvalue PDF proportional to
 \begin{equation}\label{4.a}   
 \prod_{l=1}^N   \theta_3(e^{i \theta_l} ;q)   \prod_{1 \le j < k \le N} | e^{i \theta_k} -  e^{i \theta_j}  |^2.
\end{equation}
Here
 \begin{equation}\label{t3}
  \theta_3(z;q) = \sum_{n=-\infty}^\infty q^{n^2} z^n, \qquad |q| < 1
  \end{equation}
  is a Jacobi theta function. This came about through the study of a certain solvable $U(N)$ lattice
  gauge theory in two-dimensions.
  
  Let the eigenvalue density associated with (\ref{4.a}) be denoted $ \rho_{(1),N}^{(RS)}(\theta)$.
  Here the superscript $(RS)$ indicates that (\ref{4.a}) relates to the Rogers-Szeg\"o polynomials
  in the circular analogue of (\ref{1.2}).
  Define the moments
   \begin{equation}\label{t3a}
   m_{k,N}^{(RS)}(q) = \int_0^{2 \pi} e^{i k \theta}  \rho_{(1),N}^{(RS)}(\theta) \, d \theta, \qquad k \in \mathbb Z.
 \end{equation}
 Then we have from    \cite{On81,AO84} that
  \begin{equation}\label{t3b} 
     m_{k,N}^{(RS)}(q) = -  {( - q)^k \over 1 - q^{2k}} \,
  {}_2 \phi_1  \Big ( {q^{2k}, q^{-2k} \atop q^{2}} \Big | q^{2}; q^{2N + 2} \Big ).
  \end{equation}
 Here the special function on the RHS refers to Heine's $q$-generalisation of the
 Gauss hypergeometric function,
  \begin{equation}\label{14a+}
  {}_2 \phi_1 \Big ( {a_1, a_2 \atop b_1} \Big | q;z \Big ) := \sum_{n=0}^\infty {(a_1;q)_n (a_2; q)_n \over (q;q)_n (b_1;q)_n} z^n,  
  \end{equation}
  with
   \begin{equation}\label{14b+}   
   (u;q)_n := (1 - u) (1 - q u ) \cdots (1 - q^{n-1} u).
  \end{equation}   
  Moreover it is noted in \cite{On81,AO84} that with the scaling $q = e^{-\lambda/N}$, the moments
  permit a $1/N^2$ expansion (recall (\ref{1.5}))
  \begin{equation}\label{14c+}   
 {1 \over N}  m_{k,N}^{(RS)}(q) \Big |_{q = e^{-\lambda/N}} =  \mu_{k,0}^{(RS)}(\lambda) + {1 \over N^2}  \mu_{k,2}^{(RS)}(\lambda) + {1 \over N^4}
  \mu_{k,4}^{(RS)}(\lambda) + \cdots
   \end{equation} 
   with
  \begin{equation}\label{14d+}   
   \mu_{k,0}^{(RS)}(\lambda)  = - {(-1)^k \over 2 \lambda k } \,   {}_2 F_1 \Big ({-k, k \atop 1} \Big | e^{-2\lambda} \Big).
 \end{equation} 
 ~\\
  Attracting attention in the recent literature \cite{forrester202,CZ21} is the particular example
 of (\ref{1.1}) proportional to
  \begin{equation}\label{15a}
  \prod_{l=1}^N w^{(SW)}(u_l;q) \prod_{1 \le j < k \le N} (u_k - u_j)^2, \qquad u_l \in \mathbb R^+,
  \end{equation}
  where
   \begin{equation}\label{15b}     
 w^{(SW)}(u;q) = {k \over \sqrt{\pi}} e^{- k^2 (\log u)^2}, \qquad q = e^{-1/(2k^2)},
   \end{equation}
   which is (one form of) the Stieltjes-Wigert weight from the theory of orthogonal
   polynomials \cite{Sz75}. Note that for the weight (\ref{15b}),
   \begin{equation}\label{15b+}  
   \int_0^\infty u^n  w^{(SW)}(u;q) \, du = q^{-(n+1)^2/2}.
    \end{equation} 
   The ensemble (\ref{15a}) turns out to be closely related to
   (\ref{4.a}). In particular, it was shown in \cite[Prop.~1.1]{forrester202}  that the moments $ m_{k,N}^{(SW)}(q)$ of the spectral
   density corresponding to (\ref{15a}) have, for $k \ge 1$, the evaluation
   \begin{equation}\label{15c}   
 {1 \over N}  q^{N k} m_{k,N}^{(SW)}(q) = 
   - {1 \over N} {( - q^{-1/2} )^k \over 1 - q^{-k}} \,
  {}_2 \phi_1  \Big ( {q^{k}, q^{-k} \atop q^{-1}} \Big | q^{-1}; q^{-N - 1} \Big )
     \end{equation} 
 (cf.~(\ref{t3b})).  Furthermore,  scaling $q = e^{-\lambda/N}$, the same $1/N^2$ expansion of the RHS results as
 in (\ref{14c+}), but with $-2 \lambda$ replaced by $\lambda$ throughout.
 
There is an equivalent formulation of the result (\ref{15c}) which replaces the continuous integral giving an average over
the probability density function by a Jackson $q$-integral. First recall that the Jackson integral with terminals 0 to $\infty$ is
defined by
  \begin{equation}\label{15d} 
  \int_0^\infty f(x) \, d_q x = (1 - q) \sum_{k=-\infty}^\infty f(q^k) q^k.
  \end{equation} 
By introducing the quantity
   \begin{equation}\label{15e} 
   c_q = (-q,-1,q;q)_\infty, \qquad (\alpha,\beta,\gamma;q)_\infty:= \prod_{l=0}^\infty (1 - \alpha q^l) (1 - \beta q^l) (1 - \gamma q^l),
  \end{equation}   
we know from    \cite{christiansen03} that with
   \begin{equation}\label{15f}  
  {w}^{( \widetilde{SW})}(x;q) = {1 \over (1 - q) \sqrt{q} c_q} x^{-1/2} e^{((\log x)/(2 \log q))^2},
  \end{equation}   
  we have
   \begin{equation}\label{15g}     
   \int_0^\infty  {w}^{( \widetilde{SW})}(x;q)  (q^{-1/2}x)^n \,  d_q x = q^{-(n+1)^2/2}.
   \end{equation}   
   This is precisely the same as for (\ref{15b+}), which is permitted since the quadratic exponential
   order of the rate of increase of the moments means they do not uniquely determine a weight function.
   As a consequence, the ensemble specified by replacing each $ w^{(SW)}(u_l;q)$ by
 $w^{( \widetilde{SW})}(u_l;q)$, and with the $u_l$ restricted to the $q$-lattice in the sense of 
 (\ref{15d}), we have the moments of this particular Jackson integral ensemble when multiplied by $q^{-k}$
 are also specified by (\ref{15c}).

 \subsection{Moments via Schur averages in the $q$ case}
 The method used to derive (\ref{15c}) in  \cite{forrester202} made use of knowledge
 of a more general result, namely the  closed form evaluation of  \cite{DT07}
  \begin{equation}\label{17a} 
  \langle s_\kappa(x_1,\dots,x_N) \rangle^{(SW)},
  \end{equation}
  where $s_\kappa$ denotes the Schur polynomial indexed by the partition $\kappa$,
  \begin{equation}\label{17b}   
   s_\kappa(x_1,\dots,x_N)  := {\det [ x_k^{\kappa_{N-j+1} + j - 1} ]_{j,k=1}^N \over \det [ x_k^{j-1} ]_{j,k=1}^N }.
  \end{equation}  
  Its relevance is seen from the identity (see e.g.~\cite{Ma95})
   \begin{equation}\label{13a+} 
 \sum_{j=1}^N x_j^k = \sum_{r=0}^{{\rm min} \, (k-1,N-1)} (-1)^r s_{(k-r,1^r)} (x_1,\dots, x_N),
 \end{equation}
 where $(k-r,1^r)$ denotes the partition with largest part $\kappa_1 = k  - r$, $r$ parts $(r \le N - 1)$ equal to
 1 and the remaining parts equal to 0, and thus
   \begin{equation}\label{13b+}  
   \int_0^\infty   x^{k} \rho_{(1),N}(x) \, d_q x =  \sum_{r=0}^{{\rm min} \, (k-1,N-1)} (-1)^r  \langle
 s_{(k-r,1^r)} (x_1,\dots, x_N) \rangle .
  \end{equation}  
  This is the same strategy as used in \cite{AO84} to derive (\ref{t3b}).
  
  In Section \ref{S2} we will discuss in detail particular $q$-weights which give evaluations of
  $  \langle s_\kappa \rangle $ simple enough that $\langle   \sum_{j=1}^N x_j^k  \rangle$
  as implied by (\ref{13b+}) is in a structured form suitable for further analysis. For example,
  as already remarked the evaluation (\ref{15c}) permits computation of the $N \to \infty$
  limit, scaled by requiring $q= e^{-\lambda/N}$. In subsection \ref{S2.1} we revisit the case of 
  Stieltjes-Wigert weight, specifically in the form (\ref{15f}), and show that the generating function
  in $N$ of the moments $m_{k,N}^{( \widetilde{SW})}$ admit the simple product form 
  (\ref{34}), a result which was conjectured recently in \cite{CZ21}. In subsection \ref{S2.2} we consider
  the discrete $q$-Hermite weight, which in a certain scaling, limits to the Gaussian weight in
  (\ref{1.1b}) for $q \to 1^-$. The corresponding probability density function (\ref{1.1}) is supported
  on the $q$-lattice corresponding to the Jackson integral
  \begin{equation}\label{16d}   
  \int_a^b f(x) \, d_qx := \int_0^b f(x) \,  d_qx - \int_0^a f(x) \,  d_qx 
 \end{equation} 
 with 
   \begin{equation}\label{16e}     
   \int_0^\alpha f(x) \, d_qx := (1 - q) \alpha \sum_{n=0}^\infty q^n f(\alpha q^n)
 \end{equation} 
 in the case $a=-1$, $b=1$.
  The moments in this case have also been the subject of some
  previous literature \cite{Wi12,PV14,MPS20}. As a new result, we use knowledge of the generating
  function in $N$ of $m_{k,N}^{(d\mbox{-}qH)}$, obtained in the recent work \cite{MPS20}, and
  rederived by our own working below, can be used to deduce the explicit form of the leading term in the
  corresponding scaled $1/N^2$ expansion; see (\ref{14c+H}). In subsection \ref{S2.3} the little
  $q$-Laguerre weight is considered. The corresponding ensemble (\ref{1.1}) is supported
  on the $q$-lattice implied by (\ref{16d}) with $a=0$, $b=1$. Upon a certain scaling, this reclaims the Laguerre weight in
(\ref{1.1b}) for $q \to 1^-$. Our main result is a closed-form evaluation of the moments in terms of
a particular ${}_3 \phi_2$ basic hypergeometric function.  This is presented in Proposition
\ref{P2.12}.

 \subsection{Integrable structures associated with $q$-orthogonal polynomial weights}

 Combining  (\ref{1.1c}) and (\ref{1.2c}) shows that the exponential generating function of the moments
 can be expressed as a sum according to
 \begin{equation}\label{18a} 
 \int_I e^{t x} \rho_{(1),N}(x) \, dx   =   \sum_{j=0}^{N-1} {1 \over h_j} \int_I e^{tx} (p_j(x))^2  \, d\mu, \quad d \mu = w(x) \, dx.
  \end{equation}
  In the works  \cite{ledoux04,ledoux05} Ledoux undertook a study of integrability properties
  associated with the integral in the summand on the RHS of (\ref{18a}),
   in the classical continuous cases, and also classical discrete cases. This is equivalent to studying
  \begin{equation}\label{18b} 
 \int_I e^{t x} \Big ( \rho_{(1),N}(x) -  \rho_{(1),N-1}(x)  \Big ) \, dx,
  \end{equation} 
  which is the exponential generating function for the difference $m_{k,N} - m_{k,N-1}$. 
  Making essential use of the Pearson equation, Ledoux derived a fourth order differential (difference) equation for
   $ \int_I e^{tx} (p_j(x))^2  \, d\mu$ in the classical continuous (discrete) cases respectively. 
   This methodology will be revised in Section \ref{S3}. In Section \ref{S4}, with the meaning of
  the integral, and also the exponential function $e^{tx}$,
   appropriately adjusted to the $q$ setting as in Section \ref{S2}, the classical $q$ analogues of Ledoux's integrability results are obtained.
  Thus,  the $q$ case as described in Section \ref{1.2} is missed from consideration in \cite{ledoux04,ledoux05}. 
   Specifically, in Section \ref{S4} the approach of  \cite{ledoux04} is used to derive a fourth order $q$-differential equation
   satisfied by the $q$-Laplace transform of the un-normalised measure $(p_n(x;q))^2 \, d_q \mu$.
   Here $d_q \mu$ denotes the appropriately weighted $q$-lattice corresponding to classical
$q$-orthogonal polynomials $\{ p_n(x;q) \}$. Crucial to our considerations is the $q$-Pearson equation (\ref{16b}).
   
   In \cite{ledoux04,ledoux05}, knowledge that (\ref{18b}) satisfies a fourth order differential (difference) equation was not
   used in obtaining explicit formulas for $\{ m_{k,N} \}_{k=1}^\infty$ in the classical cases. Instead, for the  classical continuous
  weights, use was made of the fact that the derivative with respect to $t$ of the LHS of (\ref{18a}) is then simply
  related to
   \begin{equation}\label{18c} 
   \int_I e^{tx} p_N(x)  p_{N-1}(x) \, d\mu .
    \end{equation} 
  The methods used in \cite{ledoux04} to study (\ref{18b}) applied to (\ref{18c}) then lead to a second order linear recurrence for the
  even moments in the case of the classical Gaussian weight (obtained first in \cite{HZ86}), and a second order linear
  recurrence for the moments in the case of the classical Laguerre weight (obtained first in \cite{HT03}).
  In the case of classical discrete weights on a linear lattice, there is no special property associated with the discrete (or continuous)
  derivative of the LHS of (\ref{18a}). Instead in \cite{ledoux05} special properties of the shifted $k$-th moment
    \begin{equation}\label{18d}  
   \int x (x -1) \cdots (x - k + 1)   (p_j(x))^2  \, d\mu 
   \end{equation} 
   for the Charlier and Meixner classical discrete weights, together with recurrences used in the derivation of 
   the fourth order difference equation relating to (\ref{18b}) as applies for all the discrete classical weights, were
   used to obtain formulas in these special cases. This working was extended to the Krawtchouk weight in the recent
   work \cite{cohen20}, and moreover the evaluations have been identified in terms of hypergeometric polynomials.
   
With these points in mind,
      in Section \ref{S5} use is made of the   $q$-Pearson equation  to study the moments of the measure
   $(H_N(x;q))^2 d_q \mu$, where $H_N(x;q)$ denotes the normalised discrete $q$-Hermite polynomials and $d_q \mu$
   denotes the 
   $q$-Hermite weight supported on the appropriate $q$-lattice, studied from the viewpoint of
   the Schur average (\ref{17a}) in Section \ref{S2}.  This is of particular interest for its combinatorial meaning
   in terms of rook placements \cite{Wi12}.


\section{Averages over Schur polynomials}\label{S2}

\subsection{The Stieltjes-Wigert weight (\ref{15f})}\label{S2.1}
Let $(\widetilde{SW})$ refer to an average with respect to the weight (\ref{15f}) in (\ref{1.1}) on the $q$-lattice in the sense of (\ref{15d}).
Let (SW) refer to an average with respect to the weight (\ref{15b}) in (\ref{1.1}). Our interest is in the moments of the density,
$ m_{k,N}^{(\widetilde{SW})}$, which we calculate via the identity  (\ref{13b+}).

With $s_\kappa$ the Schur polynomial (\ref{17b}),
according to (\ref{15b+}) and (\ref{15g}) we have that  $\langle s_\kappa \rangle^{(\widetilde{SW})} = \langle s_\kappa \rangle^{({SW})}$.
The exact form of the average on the RHS is known from \cite[Prop.~3.1]{forrester202},
\cite[Appendix A.2]{santilli21} as well as the earlier work \cite{DT07}.
 Reading off from these references then gives
  \begin{equation}\label{20a}
 \langle s_\kappa \rangle^{(\widetilde{SW})}  =
  q^{- {1 \over 2} \sum_{l=1}^N \kappa_l^2}
 \prod_{1 \le j < k \le N} {1 - q^{- (\kappa_j - j - \kappa_k + k)} \over 1 -  q^{-( k - j )}}.
 \end{equation} 
 As shown in   \cite[Coroll.~3.3 and Prop.~1.1]{forrester202}, specialising to $\kappa = (k-r,1^r)$ and making use of
 (\ref{13b+}) then implies that $(q^{Nk}/N) m_{k,N}^{(\widetilde{SW})}$ is given by (\ref{15c}), or equivalently
  \begin{equation}\label{20b} 
{1 \over N} q^{(N -1/2) k} m_{k,N}^{(\widetilde{SW})} 
  =  - {1 \over N} {( - q^{-1/2} )^k \over 1 - q^{-k}}  p_k^{(lq\text{-}J)}(q^{-N};1,q|q^{-1}).
 \end{equation} 
 Here $p_k^{(lq\text{-}J)}$ denotes the little $q$-Jacobi polynomial, specified in terms of 
 the $q$-hypergeometric function (\ref{14a+}) by
   \begin{equation}\label{20c} 
 p_n^{(lq\text{-}J)}(x;a,b|q) = {}_2 \phi_1  \Big ( {q^{-n}, a b q^{n+1} \atop a q} \Big | q; q x \Big ).
   \end{equation}   
   
 There is a further equivalent form of $m_{k,N}^{(\widetilde{SW})}$.
  This involves the $q$-binomial coefficient
    \begin{equation}\label{24b} 
  \bigg [ \begin{array}{cc} n \\ l \end{array} \bigg ]_q = { (1 - q^n) (1 - q^{n-1}) \cdots (1 - q^{n-l+1}) \over (1 - q^l) (1 - q^{l-1}) \cdots (1 - q)}
 \end{equation}  
 for $n, l$ non-negative integers.
 Thus from \cite[Eq.~(3.15) combined with (3.14)]{forrester202} we read off
  \begin{align}\label{20b+} 
 q^{(N-1/2)k} m_{k,N}^{(\widetilde{SW})} 
  & = \sum_{r=0}^{k-1} (-1)^r 
    q^{- (k-r)^2/2 - r/2}   \bigg [ {N + k - r - 1 \atop k} \bigg ]_{q^{-1}}
   \bigg [ {k- 1 \atop r} \bigg ]_{q^{-1}} \nonumber \\
   & = q^{-k N - (k^2 - 2k)/2}
  \sum_{r=0}^{k-1} (-1)^r 
    q^{ (r^2 + r)/2 + kr}   \bigg [ {N + k - r - 1 \atop k} \bigg ]_{q}
   \bigg [ {k- 1 \atop r} \bigg ]_{q},  
 \end{align}   
 where the final equality follows from the general property of the $q$-binomial
 coefficients
    \begin{equation}\label{24c} 
  \bigg [ \begin{array}{cc} n \\ l \end{array} \bigg ]_{q^{-1}} = q^{-(n-l)l}      \bigg [ \begin{array}{cc} n \\ l \end{array} \bigg ]_{q}.
  \end{equation}
  The utility of (\ref{20b+}) is that it allows for the computation of the generating function with respect to $N$ (cf.~\cite[Eq.~(2.56)]{CZ21}).
  
  \begin{proposition}\label{p2.0}
  Introduce the generating function
    \begin{equation}\label{33e} 
    G_k^{(\widetilde{SW})}(z) = \sum_{N=1}^\infty (q^{2k} z)^N     m_{k,N}^{(\widetilde{SW})}.
   \end{equation}    
  For integers $k \ge 1$ we have 
   \begin{equation}\label{34}  
  G_k^{(\widetilde{SW})}(z) =   q^{-(k^2 - 2k)/2}  {z \over 1 - z}   { (q^{k+1}z ;q)_{k-1}  \over (qz;q)_{k}}. 
  \end{equation}  
  \end{proposition}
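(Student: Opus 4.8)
The plan is to feed the explicit formula (\ref{20b+}) for $m_{k,N}^{(\widetilde{SW})}$ into the definition (\ref{33e}) and to resum over $N$ term by term. Writing the second line of (\ref{20b+}) as $m_{k,N}^{(\widetilde{SW})} = q^{-(N-1/2)k}\,q^{-kN-(k^2-2k)/2}\,S_N$ with $S_N = \sum_{r=0}^{k-1}(-1)^r q^{(r^2+r)/2+kr}\left[{k-1\atop r}\right]_q\left[{N+k-r-1\atop k}\right]_q$, the factor $(q^{2k}z)^N$ in (\ref{33e}) exactly cancels the $N$-dependent power $q^{-2kN}$ carried by $m_{k,N}^{(\widetilde{SW})}$, leaving $G_k^{(\widetilde{SW})}(z) = q^{c(k)}\sum_{N\ge1} z^N S_N$ for an explicit $N$-independent constant $c(k)$. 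Since the sum over $r$ is finite, I would interchange it with the sum over $N$, reducing the problem to evaluating $\sum_{N\ge1} z^N \left[{N+k-r-1\atop k}\right]_q$ for each fixed $r$.

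For that inner sum I would invoke the standard $q$-series identity $\sum_{n\ge0}\left[{n+k\atop k}\right]_q z^n = 1/(z;q)_{k+1}$ (see e.g.~\cite{koekoek10}). The $q$-binomial $\left[{N+k-r-1\atop k}\right]_q$ vanishes for $N\le r$, so after shifting the summation index to $n=N-r-1\ge0$ the inner sum becomes $z^{r+1}/(z;q)_{k+1}$. Substituting this back gives $G_k^{(\widetilde{SW})}(z) = q^{c(k)}\,\frac{z}{(z;q)_{k+1}}\sum_{r=0}^{k-1}(-1)^r q^{(r^2+r)/2+kr}\left[{k-1\atop r}\right]_q z^r$.

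The remaining finite sum is a disguised $q$-binomial theorem. Rewriting $q^{(r^2+r)/2+kr}=q^{\binom r2}(q^{k+1})^r$ turns it into $\sum_{r=0}^{k-1}(-1)^r q^{\binom r2}\left[{k-1\atop r}\right]_q (q^{k+1}z)^r$, which by the terminating $q$-binomial theorem $(x;q)_m=\sum_{r=0}^m(-1)^r q^{\binom r2}\left[{m\atop r}\right]_q x^r$, applied with $m=k-1$ and $x=q^{k+1}z$, equals $(q^{k+1}z;q)_{k-1}$. Finally, factoring $(z;q)_{k+1}=(1-z)(qz;q)_k$ converts the prefactor into $\frac{z}{1-z}\cdot\frac{(q^{k+1}z;q)_{k-1}}{(qz;q)_k}$, which is the product structure in (\ref{34}).

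The step I expect to be most delicate is the bookkeeping of the powers of $q$: one must combine the prefactor of (\ref{20b+}), the factor $q^{2kN}$ coming from (\ref{33e}), and the quadratic exponent $q^{(r^2+r)/2+kr}$ living inside the $r$-sum, and check that all $N$-dependence cancels so that the residual constant $q^{c(k)}$ reduces to the prefactor displayed in (\ref{34}). The only other point needing care is the index shift by $r+1$ that makes the $q$-negative-binomial generating function applicable; the convergence for $|q|<1$ and $|z|$ small that legitimises interchanging the two summations is routine.
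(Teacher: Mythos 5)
Your proposal follows the paper's proof essentially verbatim: substitute the second line of (\ref{20b+}) into (\ref{33e}), interchange the finite $r$-sum with the $N$-sum, evaluate the inner sum via $\sum_{n\ge 0}\big[{n+k\atop k}\big]_q z^n = 1/(z;q)_{k+1}$ after the shift $n=N-r-1$, and close the $r$-sum with the terminating $q$-binomial theorem, exactly as in the paper (which cites DLMF 17.2.38 and 17.2.35 for these two identities). One caution about the step you rightly flagged as delicate: actually carrying out the power-of-$q$ bookkeeping gives $m_{k,N}^{(\widetilde{SW})}=q^{-2kN+k/2-(k^2-2k)/2}S_N$, so after multiplying by $(q^{2k}z)^N$ the residual constant is $c(k)=k/2-(k^2-2k)/2$ rather than $-(k^2-2k)/2$ --- i.e.\ an extra factor $q^{k/2}$ relative to the prefactor displayed in (\ref{34}) (test $k=N=1$: (\ref{15g}) gives $m_{1,1}^{(\widetilde{SW})}=q^{-1}$, whereas (\ref{34}) would give $q^{-3/2}$), a discrepancy that the paper's own proof passes over silently.
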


   \begin{proof}
Simple manipulation gives
 \begin{equation}\label{ts.1}
 \sum_{N=1}^\infty z^N  \left [ \begin{array}{cc} N + k - r - 1 \\ k \end{array} \right ]_q =
 \sum_{N=1}^\infty  z^N   \left [  \begin{array}{cc} N  +k - r - 1 \\ N - r - 1 \end{array} \right ]_q  = z^{r+1}  \sum_{N=0}^\infty z^N   \left [  \begin{array}{cc} N  + k \\ N \end{array} \right ]_q.
  \end{equation}  
This last sum can be evaluated \cite[Eq.~17.2.38]{DLMF} to give
 \begin{equation}\label{ts.2}
{z^{r+1} \over (z;q)_{k+1}}.
  \end{equation}  
Thus
\begin{align*}
G_{k}^{(\widetilde{SW})}(z) & = { z  q^{-(k^2 - 2k)/2}   \over (z;q)_{k+1}} \sum_{r=0}^{k-1} 
(-z)^r 
    q^{ (r^2 + r)/2 + kr} 
   \bigg [ {k- 1 \atop r} \bigg ]_{q}. 
   \end{align*}
Performing the sum according to the
the $q$ generalisation of the simple binomial expansion  \cite[Eq.~17.2.35]{DLMF} gives (\ref{34}).

\end{proof}

We will see that very similar working to that in the above proof can be used to compute the generating function
with respect to $N$ for the moments corresponding to (\ref{1.1}) with the discrete $q$-Hermite weight to be
considered in the next subsection. If we start with (\ref{20b}) instead of (\ref{20b+}) a functional form
distinct from (\ref{34}) results.

  \begin{proposition}\label{p2.0+}
  As an alternative to the evaluation (\ref{34}) we have
   \begin{equation}\label{34+}  
  G_k^{(\widetilde{SW})}(z) =   z \sum_{s=0}^k b_s {q^s \over 1 - z q^s},
  \end{equation}  
  where
     \begin{equation}\label{34.1} 
     b_s = (-1)^s  q^{- (1/2)k^2 + k}  q^{s(s+1)/2} {(q;q)_{2k-s-1} \over (q;q)_s ((q;q)_{k-s})^2}.
   \end{equation} 
       \end{proposition}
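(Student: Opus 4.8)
The plan is to feed the little $q$-Jacobi evaluation (\ref{20b}) (equivalently (\ref{15c})) straight into the generating function (\ref{33e}), instead of routing through the product form (\ref{34}). The starting observation is that, reading the series definition (\ref{14a+}) with base $q^{-1}$, the polynomial $p_k^{(lq\text{-}J)}(q^{-N};1,q|q^{-1})$ is a terminating sum
\[
p_k^{(lq\text{-}J)}(q^{-N};1,q|q^{-1}) = \sum_{j=0}^{k} c_j\, q^{-j(N+1)}, \qquad c_j = \frac{(q^{k};q^{-1})_j\,(q^{-k};q^{-1})_j}{\big((q^{-1};q^{-1})_j\big)^2},
\]
the series truncating at $j=k$ because $(q^{k};q^{-1})_{k+1}=0$. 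Substituting this into (\ref{20b}) writes $m_{k,N}^{(\widetilde{SW})}$ as the fixed factor $-(-q^{-1/2})^k/(1-q^{-k})$ times $q^{-(N-1/2)k}\sum_{j=0}^{k} c_j q^{-j(N+1)}$, so that the only surviving $N$-dependence sits in the explicit powers $q^{-Nk}$ and $q^{-jN}$, while the coefficients $c_j$ are constants.

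Next I would insert this into (\ref{33e}) and interchange the finite sum over $j$ with the sum over $N$. Collecting powers of $q^N$ gives the identity
\[
(q^{2k}z)^{N}\,q^{-(N-1/2)k}\,q^{-j(N+1)} = q^{\,k/2-j}\,(q^{k-j}z)^{N},
\]
so that for each fixed $j$ the $N$-sum collapses to the single geometric series $\sum_{N\ge 1}(q^{k-j}z)^N = q^{k-j}z/(1-q^{k-j}z)$. This already displays $G_k^{(\widetilde{SW})}(z)$ as a sum of $k+1$ simple-pole terms. Reindexing by $s=k-j$ moves the $s$-th pole to $z=q^{-s}$ and puts the expression into exactly the shape $z\sum_{s=0}^{k} b_s\, q^{s}/(1-zq^{s})$ asserted in (\ref{34+}), with the coefficient read off as $b_s = -\tfrac{(-q^{-1/2})^k}{1-q^{-k}}\,q^{\,s-k/2}\,c_{k-s}$ once the displayed $q^s$ is factored out of the $s$-th term.

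The remaining step, which I expect to be the main obstacle, is to reduce this $b_s$ to the closed form (\ref{34.1}). This requires turning the three base-$q^{-1}$ symbols in $c_{k-s}$ into ordinary $(q;q)$-products via identities of the type
\[
(q^{k};q^{-1})_j = \frac{(q;q)_k}{(q;q)_{k-j}}, \qquad (q^{-k};q^{-1})_j = (-1)^j q^{-jk-\binom{j}{2}}\frac{(q;q)_{k+j-1}}{(q;q)_{k-1}}, \qquad (q^{-1};q^{-1})_j = (-1)^j q^{-\binom{j+1}{2}}(q;q)_j,
\]
and then consolidating all the accumulated powers of $q$ and the sign factors together with the prefactor $-(-q^{-1/2})^k/(1-q^{-k})$ and the $q^{k/2}$ produced in the geometric-sum step. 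The manipulation is entirely mechanical, but the overall sign $(-1)^s$ and the $q$-exponent in (\ref{34.1}) only appear after several partial cancellations, e.g.\ $(-1)^k(-1)^{k-s}=(-1)^s$ and $(q;q)_k/\big[(1-q^k)(q;q)_{k-1}\big]=1$, so careful bookkeeping of the exponents is essential; this is where I would be most wary of slips. As a sanity check the case $k=1$ can be done entirely by hand, where $p_1^{(lq\text{-}J)}(q^{-N};1,q|q^{-1})=1-q^{-N}$, the $N$-sum splits into two geometric series, and the resulting two-term partial fraction is immediate.
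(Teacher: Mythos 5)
You take the same route as the paper: its proof of this proposition also starts from (\ref{20b}), states only that ``simple manipulation shows'' $q^{2Nk}m_{k,N}^{(\widetilde{SW})}=\sum_{s=0}^{k}b_sq^{Ns}$, and then sums the geometric series in $N$ --- you have simply written out that manipulation, and every step you actually carry out is correct (the ${}_2\phi_1$ with base $q^{-1}$ does terminate at $j=k$, the $N$-dependence does collapse to $(q^{k-j}z)^N$, and your three base-$q^{-1}$ Pochhammer conversions are all right). One warning about the final bookkeeping, which you rightly flagged as the delicate part: completing it gives
\begin{equation*}
b_s=(-1)^s\,q^{-k^2/2+3k/2}\,q^{s(s-1)/2}\,\frac{(q;q)_{2k-s-1}}{(q;q)_s\left((q;q)_{k-s}\right)^2},
\end{equation*}
which is $q^{k/2-s}$ times the expression (\ref{34.1}). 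Your proposed $k=1$ sanity check decides which is right: (\ref{20b}) (equivalently (\ref{20b+})) gives $m_{1,N}=(q^{1-2N}-q^{1-N})/(1-q)$, hence $b_0=q/(1-q)$ and $b_1=-q/(1-q)$, in agreement with the display above and not with (\ref{34.1}), which would give $q^{1/2}/(1-q)$ and $-q^{3/2}/(1-q)$. So when you reach this stage, do not assume you have slipped: the coefficient as printed carries a spurious factor $q^{s-k/2}$ (relatedly, the factor $q^{k/2}$ coming from $q^{(N-1/2)k}$ versus $q^{Nk}$ appears to have been dropped in passing from (\ref{20b+}) to (\ref{34}), so (\ref{34}) and (\ref{34.1}) are not quite each other's partial-fraction expansions either). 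Your derivation, carried through carefully, is sound and lands on the corrected coefficient.
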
 

\begin{proof}
Starting from (\ref{20b}), simple manipulation shows
 \begin{equation}\label{34.2} 
 q^{2Nk} m_{k,N}^{(\widetilde{SW})}  = \sum_{s=0}^k b_s q^{Ns},
  \end{equation} 
 where $b_s$ is given by (\ref{34.1}). Substituting in the definition (\ref{33e}), the
 result (\ref{34+}) follows by performing a geometric series.
 \end{proof}
 
 Notice that (\ref{34+}) corresponds to the partial fractions expansion of (\ref{34}).
 Also, in keeping with the large $N$ expansion
  \begin{equation}\label{23a+}
  {1 \over N} m_{k,N}^{(\widetilde{SW})}(q) \Big |_{q = e^{-\lambda/N}} =
  \mu_{k,0}^{(\widetilde{SW})}(\lambda) + {1 \over N^2}   \mu_{k,2}^{(\widetilde{SW})}(\lambda)  +
 {1 \over N^4}   \mu_{k,4}^{(\widetilde{SW})}(\lambda)  + \cdots
 \end{equation} 
 where
    \begin{equation}\label{23b+}
     \mu_{k,0}^{(\widetilde{SW})}(\lambda)  = {(-1)^k \over \lambda k} \, {}_2 F_1(-k,k,1;e^\lambda),
 \end{equation} 
 as remarked upon in the text below (\ref{15c}), we can check from (\ref{34.1}) that
 $b_s |_{q \mapsto q^{-1}} = - b_s$ and furthermore that $(1/N) b_s |_{q = e^{-\lambda/N}}$ has
 a well defined limit. We remark too that with the scaled density
   \begin{equation}\label{23c+}
  {\rho}_{(1),0}^{(\widetilde{SW})}(x;\lambda) := \lim_{N \to \infty} {1 \over N} \rho_{(1),N}^{(\widetilde{SW})}(x) \Big |_{q = e^{-\lambda/N}},
  \end{equation} 
  characterised by its relationship to $  \mu_{k,0}^{(\widetilde{SW})}(\lambda)$ in (\ref{23a}),
    \begin{equation}\label{23d+}
   \mu_{k,0}^{(\widetilde{SW})}(\lambda)  = \int_0^\infty \lambda^k       {\rho}_{(1),0}^{(\widetilde{SW})}(x;\lambda)  \, d x,
   \end{equation} 
   it was shown in \cite{forrester202} how knowledge of the explicit functional form (\ref{23b+}) can be used
   to deduce that
     \begin{equation}\label{23e+}
   {\rho}_{(1),0}^{(\widetilde{SW})}(x;\lambda)   =  {1 \over \pi \lambda x} \arctan \bigg ( \sqrt{4 e^\lambda x - (1 + x)^2 \over 1 + x} \bigg )
    \chi_{z_- < x < z_+},
 \end{equation} 
 where    $z_\pm = - z \pm ( z^2 - 1)^{1/2}$ with $z = 1 - 2 e^\lambda$.
 

\subsection{The discrete $q$-Hermite weight}\label{S2.2}
Making use of the notation used in (\ref{15e}), the discrete $q$-Hermite weight is
\begin{equation}\label{24}
 w^{(d\text{-}qH)}(x)  = { ( q x, - qx; q)_\infty \over (q,-1,-q;q)_\infty}.
\end{equation}
This is supported on the $q$-lattice corresponding to (\ref{16d}) with $a=-1$ and $b=1$. Let an average
with respect (\ref{1.1}) defined by the Jackson integral (\ref{16d}) in each variable with $a=-1$ and $b=1$
and with weight (\ref{24}) be denoted $\langle \cdot \rangle^{(d\text{-}qH)}$.

\begin{proposition}\label{p2.1}
Consider a partition $\kappa = (\kappa_1,\dots, \kappa_N)$. Require that
the number of $j$ such that $\kappa_{j} + N - j $ is even must equal
$[(N+1)/2]$, while the number of   $j$ such that $\kappa_{j} + N -j$ is odd must equal
$[N/2]$. 
Let these $j$ be denoted $\{ j_l^{\rm e} \}_{l=1}^{[(N+1)/2]}$ and
$\{ j_l^{\rm o} \}_{l=1}^{[N/2]}$ respectively, and ordered from smallest to biggest
within each set. 
In the case of the empty partition, denote these same $j$ by 
$\{ j_l^{\rm e}(0) = 2l-1 \}_{l=1}^{[(N+1)/2]}$ and
$\{ j_l^{\rm o}(0) = 2l \}_{l=1}^{[N/2]}$. Let $S_0$ equal the number of
$\kappa_j$ that are odd.
We have
\begin{multline}\label{224a}
\langle s_\kappa \rangle^{(d\text{-}qH)} = (-1)^{S_0/2} 
\prod_{l=1}^{[(N+1)/2]} 
{(q;q^2)_{(\kappa_{ j_l^{\rm e} } +  N - j_l^{\rm e} )/2} 
\over (q;q^2)_{ ( N - j_l^{\rm e}(0) )/2} }
 \prod_{1 \le k < l \le [(N+1)/2]}
{ q^{\kappa_{j_k^{\rm e} } +  N -  j_k^{\rm e} } -  q^{\kappa_{ j_l^{\rm e} } + N -   j_l^{\rm e} } \over
 q^{N -  j_k^{\rm e}(0) } -  q^{N- j_l^{ \rm e}(0) } } \\
 \times
 \prod_{l=1}^{[N/2]} 
{(q;q^2)_{(\kappa_{ j_l^{\rm o} } + N + 1 -  j_l^{\rm o})/2}  
\over (q;q^2)_{ (N + 1 - j_l^{\rm o}(0) )/2} }
\prod_{1 \le k < l \le [N/2]}
{ q^{\kappa_{j_k^{\rm o} } + N -  j_k^{\rm o} } -  q^{\kappa_{ j_l^{\rm o} } +N -   j_l^{\rm o} } \over
 q^{ N - j_k^{\rm o}(0) } -  q^{ N -  j_l^{\rm o}(0) } }.
\end{multline}
If the requirement relating to the parity of $\{\kappa_{j} + N - j  \}$ does not hold,
then $\langle s_\kappa \rangle^{(d\text{-}qH)} = 0$. 
\end{proposition}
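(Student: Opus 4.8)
The plan is to reduce the Schur average to a ratio of moment determinants via the Andr\'eief (Heine--de~Bruijn) identity, to exploit the evenness of the weight to obtain a checkerboard structure, and then to evaluate the resulting block determinants by a $q$-Vandermonde reduction. First I would write $s_\kappa(x)\prod_{1\le j<k\le N}(x_k-x_j)=\det[x_k^{\kappa_{N-j+1}+j-1}]_{j,k=1}^N$, so that the un-normalised average equals the integral of $\det[x_k^{\kappa_{N-j+1}+j-1}]\,\det[x_k^{m-1}]\,\prod_l w^{(d\text{-}qH)}(x_l)\,d_q x_l$. Since the Andr\'eief identity is a purely algebraic consequence of multilinearity and antisymmetry, it applies verbatim to the signed Jackson measure and gives $N!\,\det[\mu_{\kappa_{N-j+1}+j+m-2}]_{j,m=1}^N$, where $\mu_n:=\int_{-1}^{1}x^n w^{(d\text{-}qH)}(x)\,d_q x$. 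Dividing by the same expression with $\kappa$ the empty partition yields
\begin{equation*}
\langle s_\kappa \rangle^{(d\text{-}qH)}
= \frac{\det[\mu_{\kappa_{N-j+1}+j+m-2}]_{j,m=1}^N}{\det[\mu_{j+m-2}]_{j,m=1}^N}.
\end{equation*}

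Next I would record the two inputs about the weight: it is even and supported on the symmetric lattice $\{\pm q^n\}$, so every odd moment vanishes, while the even moments are, up to an overall constant that cancels in the ratio, $\mu_{2r}\propto (q;q^2)_r$. Vanishing of the odd moments forces a checkerboard pattern on both determinants: the $(j,m)$ entry is nonzero only when $\kappa_{N-j+1}+j+m$ is even. A nonzero contribution to the Leibniz expansion therefore needs a permutation matching each row to a column of compatible parity, which is possible exactly when the number of rows of each parity equals the number of columns of that parity. As the columns split into $[(N+1)/2]$ of one parity and $[N/2]$ of the other, this translates precisely into the stated condition on the parities of $\{\kappa_j+N-j\}$ (using that this multiset agrees with $\{\kappa_{N-j+1}+j-1\}$); if the condition fails, the determinant, and hence the average, is zero.

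When the condition holds I would permute rows and columns to group them by parity, which block-diagonalises each determinant into an even block of size $[(N+1)/2]$ and an odd block of size $[N/2]$, so the determinant factorises into two smaller ones. Writing a typical even exponent as $2(P_i+Q_m)$, with $P_i=(\kappa_{\cdot}+N-\cdot)/2$ coming from the rows and $Q_m$ from the ($\kappa$-independent) columns, each block takes the form $\det[(q;q^2)_{P_i+Q_m}]$. Using $(q;q^2)_{P+Q}=(q;q^2)_P\,(q^{2P+1};q^2)_Q$ I would pull out $\prod_i (q;q^2)_{P_i}$, reducing the block to $\det[(q^{2P_i+1};q^2)_{Q_m}]$. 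Because the column indices $Q_m$ run through consecutive integers, the degree-triangular structure of the polynomials $x\mapsto (x;q^2)_{Q}$ collapses this to a Vandermonde determinant in the variables $q^{2P_i+1}$, up to the product of leading coefficients $\prod_Q(-1)^Q q^{Q(Q-1)}$. This produces exactly the factors $(q;q^2)_{(\kappa_{j_l^{\rm e}}+N-j_l^{\rm e})/2}$ together with the differences $q^{\kappa_{j_k^{\rm e}}+N-j_k^{\rm e}}-q^{\kappa_{j_l^{\rm e}}+N-j_l^{\rm e}}$ appearing in (\ref{224a}), and likewise for the odd block; the empty-partition computation supplies the denominators, and the common leading-coefficient products and the cancelling overall moment constant drop out of the ratio.

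The main obstacle is bookkeeping rather than a single hard estimate. I must identify the two parity classes with the index sets $\{j_l^{\rm e}\}$, $\{j_l^{\rm o}\}$ consistently in numerator and denominator, and track the sign produced by the sorting permutations. The prefactor is the ratio of the two sorting signs: a row changes parity class exactly when $\kappa_j$ is odd, so there are $S_0$ such changes, which forces $S_0$ to be even (consistent with the parity counts) and, once the interleaving of the two classes is analysed carefully, yields $(-1)^{S_0/2}$. Confirming that the Vandermonde variables align as $q^{\kappa_j+N-j}$ rather than $q^{\kappa_j+N-j+1}$ (the extra factor $q$ per row cancelling in the ratio) and pinning down the precise normalisation of $\mu_{2r}$ are the remaining routine checks.
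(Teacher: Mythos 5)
Your proposal is correct and follows essentially the same route as the paper's proof: Andr\'eief's identity reduces the average to a determinant of discrete $q$-Hermite moments, the vanishing of odd moments forces the checkerboard/block structure and the parity condition, the sign $(-1)^{S_0/2}$ arises from the sorting row interchanges, and each block is evaluated by extracting $(q;q^2)_{P}$ via $(q;q^2)_{P+Q}=(q;q^2)_P(q^{2P+1};q^2)_Q$ and reducing the remainder to a Vandermonde determinant in $q^{2P_i+1}$ (the paper cites Noumi for this last evaluation, whereas you rederive it from degree-triangularity, which is the standard argument). No gaps.
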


\begin{proof}
Denote
  \begin{equation}\label{21a} 
  I_\kappa^{(d\text{-}qH)} := {1 \over N!} \int_{-1}^1 d_q x_1 \, w^{(d\text{-}qH)}(x_1) \cdots  \int_{-1}^1 d_q x_N \, w^{(d\text{-}qH)}(x_N) \, s_\kappa(x_1,\dots,x_N)
  \prod_{1 \le j < k \le N} (x_k - x_j)^2.
  \end{equation}  
  Making use of the Vandermonde determinant identity
 \begin{equation}\label{21b}   
 \det [x_j^{k-1} ]_{j,k=1}^N = \prod_{1 \le j < k \le N} (x_k - x_j)
   \end{equation}  
   to substitute for the product of differences in (\ref{21a}), and of (\ref{17b}) to substitute for the Schur polynomial, we see
 \begin{align}\label{21c} 
  I_\kappa^{(d\text{-}qH)}  & = {1 \over N!}  \int_{-1}^1 d_q x_1 \, w^{(d\text{-}qH)}(x_1) \cdots  \int_{-1}^1 d_q x_N \, w^{(d\text{-}qH)}(x_N) \,
  \det [ x_j^{\kappa_{N+1-k} + k - 1} ]_{j,k=1}^N \det [ x_j^{k-1} ]_{j,k=1}^N  \nonumber \\
  & = \det \Big [  \int_{-1}^1   w^{(d\text{-}qH)}(x)  x^{\kappa_{N+1-j} + j + k - 2} \, d_q x \Big ]_{j,k=1}^N,
  \end{align}
  where the second equality follows from Andr\'eief's identity \cite{An86,Fo18}.
  
  The final determinant in (\ref{21c}) consists of the moments of the discrete $q$-Hermite weight, which we know are
  given by (see e.g.~\cite{NS13})
   \begin{equation}\label{21d}  
  \int_{-1}^1   w^{(d\text{-}qH)}(x)  x^{k}     \, d_q x = (1 - q) \Big ( {1 + (-1)^k \over 2} \Big ) (q;q^2)_{k/2}.
  \end{equation}
  In particular, this tells us that along each row of the determinant every second entry is zero. Interchange of
  rows and columns can therefore bring the determinant to the block form
    \begin{equation}\label{21e}  
    \det \begin{bmatrix} A_1 & 0 \\ 0 & A_2 \end{bmatrix},
 \end{equation}
 where the rows in $A_1$ have $\kappa_{N+1-j} + j -1$ even, while the rows in
 $A_2$ have $\kappa_{N+1-j} + j -1$  odd. Furthermore, the number of columns in
 $A_1$ is $[(N+1)/2]$, while the number of columns in $A_2$ is $[N/2]$.
 For (\ref{21e}) to be nonzero we require    $A_1$ and $A_2$ to be square. After replacing
 $j$ by $N+1-j$ it follows that
 the number of $j$ such that $\kappa_{j} + N - j $ is even (odd) must equal
$[(N+1)/2]$ ($[N/2]$).  Let these $j$ be denoted as in the statement of the proposition.

The block form of the determinant (\ref{21e}) implies the factorisation
  \begin{equation}\label{22a}  
    I_\kappa^{(d\text{-}qH)} = (-1)^{S_0/2} (1 - q)^N \det A_1 \det A_2
 \end{equation}
 where
  \begin{align}\label{22b}    
  A_1 & = \Big [ (q; q^2)_{(\kappa_{ j_l^{\rm e} } + N -  j_l^{\rm e}  + 2(k-1))/2 } \Big ]_{j,k=1}^{[(N+1)/2]} \nonumber \\
    A_2 & = \Big [ (q; q^2)_{(\kappa_{ j_l^{\rm o} } +  N + 1 - j_l^{\rm o}  + 2(k-1))/2 } \Big ]_{j,k=1}^{[N/2]},
  \end{align}
 and $S_0$ denotes the number of parts $\kappa_j$ that are odd.
Note that  the factor $(-1)^{S_0/2}$ is due to it being necessary to undertake $S_0/2$ row interchanges.
  
  Now, for general $a_j \in \mathbb Z_{\ge 0}$
   \begin{equation}\label{23a}  
   (q;q^2)_{a_j + k - 1} = (q;q^2)_{a_j} (q^{1+2a_j}; q^2)_{k-1}
  \end{equation}
  and thus
     \begin{equation}\label{23b}  
     \det \Big [ (q;q^2)_{a_j + k - 1} \Big ]_{j,k=1}^n = \prod_{j=1}^n (q;q^2)_{a_j}
     \det \Big [ ( q^{1 + 2a_j}; q^2)_{k-1} \Big ]_{j,k=1}^n.
 \end{equation}
 The determinant on the RHS has the known evaluation \cite{No15}     
  \begin{equation}\label{23c} 
\prod_{1 \le k < l \le n} (q^{1 + 2 a_k} - q^{1 + 2 a_l}) \prod_{k=0}^{n-1} q^{2k ( n - 1 - k)}.
  \end{equation}
  It follows that both determinants in (\ref{22b}) can be evaluated explicitly.
  Doing this, and normalising to unity for the empty partition, 
  (\ref{224a}) is obtained.

  \end{proof}

  From the definition of the Jackson integral (\ref{16d}), for any $a > 0$
   \begin{equation}\label{A.1}
   \int_{-1}^1 f(x) \, d_q x =  {1 \over a} \int_{-a}^a f \Big ( {x \over a} \Big ) \, d_q x
   \end{equation}
   as is familiar in the continuous case. Moreover, for the discrete Hermite weight (\ref{24})
   it is well known that (see \cite{AAT19} for details of the required working) 
     \begin{equation}\label{A.1a} 
     \lim_{q \to 1^-} {1 \over a}  w^{(d\text{-}qH)}\Big ( {x \over a} \Big ) \Big |_{a = (1 - q^2)^{-1/2}} =
     {1 \over \sqrt{\pi}} e^{- x^2}.
   \end{equation}   
   The significance of (\ref{A.1}) and (\ref{A.1a}) in the context of averages
   $\langle \cdot \rangle^{(d\text{-}qH)}$ is that they imply
   \begin{equation}\label{A.1b}  
       \lim_{q \to 1^-}   (1 - q^2)^{-|\kappa|/2}   \langle s_\kappa \rangle^{(d\text{-}qH)} =   \langle s_\kappa  \rangle^{(G)},
   \end{equation}          
   where the average on the RHS is with respect to (\ref{1.1}) with the Gaussian weight $w^{(G)}(x)$ from
   (\ref{1.1b}). Applying (\ref{A.1b}) then allows for the computation of $ \langle s_\kappa \rangle^{(G)}$.
   
   \begin{coro}
   Let the parity of the parts of $\kappa$ be restricted as in Proposition \ref{p2.1}, and furthermore make use of the
   notation therein.  We have
 \begin{multline}\label{34a}
 \langle 
s_\kappa \rangle^{(G)} = (-1)^{S_0/2} 
 \prod_{l=1}^{[(N+1)/2]} 
 { (\kappa_{ j_l^{\rm e} } +  N - j_l^{\rm e} - 1)!!
 \over ( N - j_l^{\rm e}(0) - 1)!!}
  \prod_{1 \le k < l \le [(N+1)/2]}
{ \kappa_{ j_l}^{\rm e}  -   j_l^{\rm e}   - \kappa_{j_k}^{\rm e}  +  j_k^{\rm e}  \over 
  j_l^{\rm e} -   j_l^{\rm e} } \\ \times
   \prod_{l=1}^{[N/2]} 
{ (\kappa_{ j_l^{\rm o} } + N  -  j_l^{\rm o}) !! \over
 (N  - j_l^{\rm o}(0) )!! }
\prod_{1 \le k < l \le [N/2]}
{ \kappa_{ j_l}^{\rm o}  -   j_l^{\rm o}   - \kappa_{j_k}^{\rm o}  +  j_k^{\rm o}  \over 
  j_l^{\rm o} -   j_l^{\rm o} }.
\end{multline}  
 \end{coro}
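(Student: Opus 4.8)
The plan is to obtain \eqref{34a} by letting $q \to 1^-$ in the explicit evaluation \eqref{224a} of Proposition \ref{p2.1}, with the scaling factor $(1-q^2)^{-|\kappa|/2}$ inserted as dictated by \eqref{A.1b}. The parity hypothesis on $\{\kappa_j + N - j\}$ and the index sets $\{j_l^{\rm e}\}$, $\{j_l^{\rm o}\}$, $\{j_l^{\rm e}(0)\}$, $\{j_l^{\rm o}(0)\}$ depend only on the combinatorial data of $\kappa$, so they are unchanged in the limit and the prefactor $(-1)^{S_0/2}$ passes through untouched. It therefore remains to compute the $q \to 1^-$ asymptotics of each analytic building block in \eqref{224a} and to reconcile the resulting powers of $(1-q)$ with $(1-q^2)^{-|\kappa|/2}$.

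First I would record the leading behaviour of the two kinds of factor. For the $q$-shifted factorials, $(q;q^2)_m = \prod_{j=1}^m (1 - q^{2j-1})$ with each $1 - q^{2j-1} \sim (2j-1)(1-q)$, so $(q;q^2)_m \sim (1-q)^m (2m-1)!!$; hence every quotient $(q;q^2)_a/(q;q^2)_b$ in \eqref{224a} tends to $(1-q)^{a-b}\,(2a-1)!!/(2b-1)!!$, and reading off $2a-1$ and $2b-1$ from the even and odd blocks reproduces exactly the double-factorial ratios displayed in \eqref{34a}. For the Vandermonde-type factors one uses $q^P - q^{P'} \sim (P'-P)(1-q)$, so that $(q^P - q^{P'})/(q^C - q^{C'}) \to (P'-P)/(C'-C)$; with $P,P'$ the shifted exponents and $C,C'$ the reference exponents, each quotient becomes the ratio of linear differences in \eqref{34a}, the even denominator being $j_k^{\rm e}(0) - j_l^{\rm e}(0)$ and similarly in the odd block.

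The step I expect to require the most care is the accounting of the powers of $(1-q)$. The Vandermonde quotients are each $O((1-q)^0)$, so only the $q$-shifted factorial ratios contribute; summing the exponents $a-b$ over the even and odd blocks gives $\tfrac12\big(\sum_l \kappa_{j_l} + \sum_l (j_l(0) - j_l)\big)$, and since $\{j_l^{\rm e}\}\cup\{j_l^{\rm o}\}$ and $\{j_l^{\rm e}(0)\}\cup\{j_l^{\rm o}(0)\}$ both exhaust $\{1,\dots,N\}$ the index contributions cancel, leaving total power $|\kappa|/2$. Multiplying by $(1-q^2)^{-|\kappa|/2} = (1-q)^{-|\kappa|/2}(1+q)^{-|\kappa|/2}$ then cancels the $(1-q)$ powers and leaves the finite residue $(1+q)^{-|\kappa|/2}\to 2^{-|\kappa|/2}$, which furnishes the overall normalisation; a direct check at $N=1$, $\kappa=(2)$ gives $(1-q^2)^{-1}(q;q^2)_1 = 1/(1+q)\to 1/2 = \langle x^2\rangle^{(G)}$, consistent with this factor. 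As an independent confirmation one may instead run the Andr\'eief argument of Proposition \ref{p2.1} directly with the Gaussian moments $\int_{-\infty}^\infty x^{2m} e^{-x^2}\,dx\big/\int_{-\infty}^\infty e^{-x^2}\,dx = (2m-1)!!/2^m$, which transports the powers of $2$ transparently.
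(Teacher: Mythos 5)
Your proposal is correct and follows exactly the route the paper takes: the corollary is obtained by applying the limit \eqref{A.1b} to the explicit $q$-evaluation \eqref{224a}, and your asymptotics for the $q$-shifted factorials and the Vandermonde-type quotients, together with the count of powers of $(1-q)$, are precisely the computation the paper leaves implicit. Your careful bookkeeping moreover surfaces the residual factor $2^{-|\kappa|/2}$ (appropriate for the weight $e^{-x^2}$) and identifies the denominators as $j_k^{\rm e}(0)-j_l^{\rm e}(0)$, both of which are garbled or suppressed in the printed form of \eqref{34a}.
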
  
 
 \begin{remark}\label{Rm1}
 The evaluation of $\langle s_\kappa \rangle^{(G)}$, albeit in a different but equivalent form to that on the
 RHS of (\ref{34a}), was first obtained by Di Francesco and Itzykson \cite{DI93}. Their derivation
 made essential use of the character interpretation of the Schur polynomials. Subsequently, but without
 the details being given, it was noted in \cite{KSW96} that a direct derivation along the lines of that implied in
 the proof of Proposition \ref{p2.1} could be given.
 \end{remark}

  We now specialise (\ref{34a}) to partitions of the form $\kappa = (2k-r,1^r)$.

  \begin{coro} (Venkataramana \cite[Th.~4, with the leading sign and exponent of the
  factor of $q$ corrected]{PV14})
For $r$ a non-negative integer, $0 \le r \le 2k$, we have
 \begin{equation}\label{24a}
\langle s_{(2k-r,(1)^r)} \rangle^{(d\text{-}qH)} = (-1)^{\floor{(r+1)/2}} q^{\floor{r/2}(\floor{r/2} + 1)}
  \left [ \begin{array}{cc} N + 2k - r - 1 \\ 2k \end{array} \right ]_q
   \left [ \begin{array}{cc}  k  - 1 \\  \floor{r/2} \end{array} \right ]_{q^2} (q; q^2)_k.
   \end{equation}
\end{coro}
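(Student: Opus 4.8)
The plan is to substitute the hook-plus-column partition $\kappa=(2k-r,1^r)$ directly into the closed evaluation (\ref{224a}) of Proposition \ref{p2.1} and to simplify its three ingredients --- the global sign, the diagonal $(q;q^2)$-factors, and the Vandermonde-type double products --- one at a time. First I would record the shifted parts $\ell_j:=\kappa_j+N-j$: here $\ell_1=N-1+2k-r$, then $\ell_j=N-j+1$ for $2\le j\le r+1$, and $\ell_j=N-j$ for $r+2\le j\le N$. As a set this is $\{0,1,\dots,N-1\}$ with the single value $N-r-1$ deleted and the value $N-1+2k-r$ (which exceeds $N-1$ as long as $r<2k$) inserted. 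Since the two values differ by $2k$ they have the same parity, so deletion and insertion preserve the number of even and of odd $\ell_j$; this is exactly the parity hypothesis of Proposition \ref{p2.1}, and it exhibits $\kappa$ as the empty-partition configuration perturbed in a single slot of one parity class, the class being fixed by the parity of $N-r$. (The boundary value $r=2k$ is degenerate: then $\kappa$ is not a partition, $s_\kappa=0$, and the right side of (\ref{24a}) also vanishes since $\big[{k-1\atop k}\big]_{q^2}=0$.)

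Counting the odd parts of $(2k-r,1^r)$ gives $S_0=r$ when $r$ is even and $S_0=r+1$ when $r$ is odd, so in both cases $S_0/2=\floor{(r+1)/2}$. Writing the two Vandermonde products in (\ref{224a}) in canonically decreasing order, each carries a sign prefactor $(-1)^{\binom{m}{2}}$ ($m$ the class size) which cancels in the numerator/denominator ratio; hence the global sign is exactly the explicit prefactor $(-1)^{S_0/2}=(-1)^{\floor{(r+1)/2}}$. For the diagonal factors I would use that the product over $l$ in (\ref{224a}) is symmetric in the parts of the perturbed parity class, so it equals the ratio of $\prod(q;q^2)_\bullet$ over the multiset of $\kappa$-parts to the same over the empty configuration; all shared parts cancel and only the replaced slot survives, collapsing via $(q;q^2)_{a+k}/(q;q^2)_a=(q^{1+2a};q^2)_k$ to $(q^{N-r};q^2)_k$ (or, in the other parity case, to $(q^{N-r+1};q^2)_k$).

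In the perturbed parity class, all pairwise factors not touching the moved slot cancel between the $\kappa$-numerator and the empty-partition denominator, so the double product telescopes to $\prod_{j}\bigl(1-q^{(N-1+2k-r)-a_j}\bigr)\big/\prod_j\bigl(1-q^{|a_s-a_j|}\bigr)$, where $a_s=N-r-1$ and the $a_j$ run over the remaining members of the class. Extracting $q^{\min}$ from each surviving difference contributes the power $q^{\alpha(\alpha+1)}$, where $\alpha=\floor{r/2}$ is the number of class-members exceeding $a_s$; this is the factor $q^{\floor{r/2}(\floor{r/2}+1)}$. The surviving $(1-q^{\bullet})$-products are base $q^2$ (class-members differ by $2$): those coming from members above $a_s$ assemble into $\big[{k-1\atop\floor{r/2}}\big]_{q^2}$, while those from members below $a_s$ give a second $q^2$-binomial. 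Multiplying this last factor by the diagonal $(q;q^2)$-ratio and invoking the splittings $(q^{N-r};q)_{2k}=(q^{N-r};q^2)_k(q^{N-r+1};q^2)_k$ and $(q;q)_{2k}=(q;q^2)_k(q^2;q^2)_k$ converts the base-$q^2$ pieces into $\big[{N+2k-r-1\atop 2k}\big]_q(q;q^2)_k$, which together with the sign and $q$-power yields (\ref{24a}).

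The main obstacle is bookkeeping rather than conceptual. Because the parity of $N-r$ decides whether the moved slot lies in the even class or the odd class of Proposition \ref{p2.1} (the odd class carrying the extra $+1$ in its $(q;q^2)$-indices), the diagonal factor and the ``below'' $q^2$-binomial individually differ between the two cases: one produces $(q^{N-r};q^2)_k$ from the diagonal and $(q^{N-r+1};q^2)_k$ from the Vandermonde, the other the reverse. The work is to verify that both orderings multiply to the same $(q^{N-r};q)_{2k}/(q^2;q^2)_k$, for which the $q$-splitting of $(q^{N-r};q)_{2k}$ is the reconciling identity, and to check that the count of class-members above the moved slot equals $\floor{r/2}$ uniformly in $N$ and in both parities, so that the exponent $q^{\alpha(\alpha+1)}$ and the binomial $\big[{k-1\atop\floor{r/2}}\big]_{q^2}$ come out $N$-independent and case-independent as stated.
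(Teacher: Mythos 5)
Your proposal is correct and follows essentially the same route as the paper's proof: specialise the general Schur-average formula (\ref{224a}) of Proposition \ref{p2.1} to the hook partition $(2k-r,1^r)$ and simplify the sign, the single $(q;q^2)$-products and the Vandermonde double products separately before recombining. You are in fact more explicit than the paper — treating both parities of $N-r$ rather than assuming $N$ even, and spelling out the single-slot-perturbation telescoping and the reconciling splitting $(q^{N-r};q)_{2k}=(q^{N-r};q^2)_k(q^{N-r+1};q^2)_k$ — but the underlying argument is the same.
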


\begin{proof}
Our task is to simplify (\ref{34a}) in the case of $\kappa_1 = 2k - r$, $\kappa_2 = 1, \dots, \kappa_{r+1}=1$.
For convenience we suppose that $N$ is even; the working in the case $N$ odd is analogous, and leads
to the same result.

To begin, we observe that the contribution from the single products in (\ref{24a}), the first involving
$\{ j_l^{\rm e} \}$ and the second involving $\{ j_l^{\rm o} \}$, comes from the first (second) product for
$r$ odd (even). This contribution is seen to equal
 \begin{equation}\label{30a}
 {(q;q^2)_{N/2 + k -  \floor{r} } \over
 (q;q^2)_{ N/2 - \floor{r} } }.
  \end{equation}
  
  It is similarly true that the contribution from the double products in (\ref{24a}), the first involving
$\{ j_l^{\rm e} \}$ and the second involving $\{ j_l^{\rm o} \}$, comes from the first (second) product for
$r$ odd (even). After simplification, this contribution is seen to equal
 \begin{equation}\label{30b}
 q^{\floor{r/2}(\floor{r/2} + 1)}
    \left [ \begin{array}{cc}  k  - 1 \\  \floor{r/2} \end{array} \right ]_{q^2} 
 {(q^2;q^2)_{N/2 + k - \floor{r/2} - 1} \over (q^2;q^2)_{N/2  - \floor{r/2} - 1}}  {(q; q^2)_k \over   (q^2;q^2)_k }.
 \end{equation} 
 According to   (\ref{34a}) there is also a factor
  \begin{equation}\label{30c}
  (-1)^{ \floor{(r+1)/2} } .
  \end{equation} 
  
  Multiplying together (\ref{30a}),  (\ref{30b}), (\ref{30c}) and simplifying gives (\ref{24a}).

 \end{proof}

 \begin{remark} The proof of (\ref{24a}) given in \cite{PV14} is different to the one above, relying on 
 properties of particular multivariate discrete $q$-Hermite polynomials; for generalisations of the
 latter in the setting of Macdonald polynomial theory involving parameter $(q,t)$, see \cite{BF00}. The average of a Macdonald
 polynomial in the latter setting has been conjectured in \cite{LPSZ20}. The case $q=t$ corresponds to the Schur
 average in Proposition \ref{p2.1}, giving a character based formula which is a $q$-generalisation of the evaluation
 of (\ref{34a}) in \cite{DI93}.  In the recent work \cite{MPS20}, specialising the latter to partitions $\kappa = (2k-r, (1)^r)$ has been
 shown to give a formula equivalent to (\ref{24a}).
 \end{remark}

 Substituting (\ref{24a}) in (\ref{13b+}) gives for the moments of the density of the discrete $q$-Hermite ensemble \cite{PV14}
   \begin{equation}\label{30d}
   m_{2k,N}^{(d\text{-}qH)} = \sum_{r=0}^{2k-1}  (-1)^{r+\floor{(r+1)/2}} q^{\floor{r/2}(\floor{r/2} + 1)}
  \left [ \begin{array}{cc} N + 2k - r - 1 \\ 2k \end{array} \right ]_q
   \left [ \begin{array}{cc}  k  - 1 \\  \floor{r/2} \end{array} \right ]_{q^2} (q; q^2)_k;
   \end{equation}
cf.~(\ref{20b+}). For example, with $k=1$
 \begin{align}\label{30e}
   m_{2,N}^{(d\text{-}qH)} & =  (q;q^2)_1 \bigg ( \begin{bmatrix} N + 1 \\ 2 \end{bmatrix}_q +  \begin{bmatrix} N  \\ 2 \end{bmatrix}_q  \bigg ) = {1 \over 1 - q^2} \Big (
   q^{2N} (q + q^{-1}) - q^N (q + 2 + q^{-1}) + 2 \Big ). 
  \end{align}

It has recently been observed \cite{MPS20} that (\ref{30d})  simplifies upon forming the generating
function with respect to $N$.

\begin{proposition} (Morozov, Popolitov and Shakirov \cite[Eq.~(4-7) with $\lambda = z q^{2m}, m = k$]{MPS20})   
Define
 \begin{equation}\label{e.1}
G_{2k}^{(d\text{-}qH)}(z) = \sum_{N=1}^\infty z^N  m_{2k,N}^{(d\text{-}qH))}.
 \end{equation}
For integers $k \ge 1$ we have
   \begin{equation}\label{30e+}
  G_{2k}^{(d\text{-}qH)}(z) =   {z (1 + z) \over 1 - z} { ((qz)^2;q^2)_{k-1} \over  (qz;q)_{2k} } (q; q^2)_k.
 \end{equation}
 \end{proposition}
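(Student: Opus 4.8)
The plan is to mimic exactly the proof of Proposition \ref{p2.0}, now starting from the explicit moment formula (\ref{30d}) rather than (\ref{20b+}). First I would substitute (\ref{30d}) into the definition (\ref{e.1}) and interchange the finite sum over $r$ with the sum over $N$, so that the whole $N$-dependence is isolated in $\sum_{N=1}^\infty z^N \bigl[{N+2k-r-1 \atop 2k}\bigr]_q$. This is precisely the sum appearing in (\ref{ts.1})--(\ref{ts.2}) with $k$ replaced by $2k$, and hence evaluates to $z^{r+1}/(z;q)_{2k+1}$. Pulling out the common factor $(q;q^2)_k/(z;q)_{2k+1}$ leaves a single finite sum over $r$ from $0$ to $2k-1$ with summand $(-1)^{r+\floor{(r+1)/2}}\, q^{\floor{r/2}(\floor{r/2}+1)} \bigl[{k-1 \atop \floor{r/2}}\bigr]_{q^2}\, z^{r+1}$.

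The key observation is that the power of $q$, the $q^2$-binomial coefficient, and --- as one checks --- the overall sign all depend on $r$ only through $s:=\floor{r/2}$. Indeed, for $r=2s$ one has $\floor{(r+1)/2}=s$, so the sign is $(-1)^{2s+s}=(-1)^s$, while for $r=2s+1$ one has $\floor{(r+1)/2}=s+1$, so the sign is $(-1)^{(2s+1)+(s+1)}=(-1)^{3s+2}=(-1)^s$ as well. I would therefore group the two terms $r=2s$ and $r=2s+1$ for each $s\in\{0,\dots,k-1\}$; their $z$-powers $z^{2s+1}$ and $z^{2s+2}$ combine to $z^{2s+1}(1+z)$, and the factor $z(1+z)$ then factors out of the entire sum. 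This is the step I expect to carry the real content: the parity bookkeeping inherited from the floor functions in (\ref{30d}) is exactly what produces the $(1+z)$ prefactor in (\ref{30e+}), and it is the only place where this derivation genuinely differs from that of Proposition \ref{p2.0}.

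What remains is the sum $\sum_{s=0}^{k-1}(-1)^s q^{s(s+1)}\bigl[{k-1 \atop s}\bigr]_{q^2} z^{2s}$. Writing $q^{s(s+1)}=(q^2)^{\binom{s}{2}}(q^2)^{s}$, this is an instance of the $q$-binomial theorem \cite[Eq.~17.2.35]{DLMF} with base $q^2$ and argument $-q^2z^2$, and so evaluates to $((qz)^2;q^2)_{k-1}$. Finally I would use the telescoping identity $(z;q)_{2k+1}=(1-z)(qz;q)_{2k}$ to rewrite the prefactor, which assembles everything into $\dfrac{z(1+z)}{1-z}\dfrac{((qz)^2;q^2)_{k-1}}{(qz;q)_{2k}}(q;q^2)_k$, as claimed in (\ref{30e+}). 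Beyond the parity grouping no step should present any difficulty, the remaining manipulations being the same $q$-series identities already used in the proof of Proposition \ref{p2.0}.
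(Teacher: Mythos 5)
Your proposal is correct and follows essentially the same route as the paper's proof: reduce to the geometric-type sum of $q$-binomial coefficients via (\ref{ts.1})--(\ref{ts.2}) with $k\mapsto 2k$, split the $r$-sum by parity to extract the factor $z(1+z)$, and evaluate the remaining sum by the $q$-binomial theorem. The only difference is that you spell out the parity/sign bookkeeping that the paper dispatches in one sentence, and you verify it correctly.
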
   
 
 \begin{proof}
 This can be established by working analogous to that used in the derivation of Proposition \ref{p2.0}.
 Thus, according to (\ref{ts.1}) and (\ref{ts.2}) with $k$ replaced by $2k$ we have
 $$
 \sum_{N=1}^\infty z^N  \left [ \begin{array}{cc} N + 2k - r - 1 \\ 2k \end{array} \right ]_q =
{z^{r+1} \over (z;q)_{2k+1}}.
$$
Hence
\begin{align*}
G_{2k}^{(d\text{-}qH)}(z) & = {(q; q^2)_k  \over (z;q)_{2k+1}} \sum_{r=0}^{2k-1} z^{r+1} (-1)^{r+\floor{(r+1)/2}} q^{\floor{r/2}(\floor{r/2} + 1)}
   \left [ \begin{array}{cc}  k  - 1 \\  \floor{r/2} \end{array} \right ]_{q^2}  \\
 & =  z (1 + z) \sum_{r=0}^{k-1} z^{2r} (-1)^{r} q^{r(r+1)}
   \left [ \begin{array}{cc}  k  - 1 \\  r \end{array} \right ]_{q^2}, 
   \end{align*}
where the second equality follows by breaking up the sum in the first equality according to the parity of $r$.  
But according to the $q$ generalisation of the simple binomial expansion  \cite[Eq.~17.2.35]{DLMF}, this last sum has the evaluation
$((qz)^2;q^2)_{k-1}$, and (\ref{30e+}) follows.

\end{proof}

In keeping with the relationship between (\ref{34.2}), (\ref{34+}) and (\ref{33e}) we can use (\ref{30e+}) to deduce the coefficients in the expansion
   \begin{equation}\label{F.1}
   q^k m_{2k,N}^{(d\text{-}qH)} = \sum_{p=0}^{2k} c_p q^{pN}
  \end{equation}  
  (here the factor of $q^k$ in the LHS is for later convenience).
  
  \begin{proposition}\label{PM} (Morozov, Popolitov and Shakirov \cite[Equivalent to Eqns.~(3-7) and (3-8)]{MPS20})   
  We have
  \begin{align*}
  c_0 & = {2 q^k \over 1 - q^{2k}} \\
  c_1 & = \left \{ \begin{array}{ll} 0, & k > 1 \\
  - {1 + q \over 1 -  q}, & k = 1. \end{array} \right.
  \end{align*}
  For $p \ge 2$ we have
   \begin{align*}
  c_p &= 0, \quad   2 \le p < k \\
  c_p  & = (-1)^{k+p-1} \Big ( {1 + q^p \over 1 - q^p} \Big ) q^{p(p+1)/2 - 2 p k + k (k-1)}
  {(q^2;q^2)_{p-1} (q;q^2)_k \over
  (q^2;q^2)_{p-k} (q;q)_{p-1} (q;q)_{2k-p}}, \quad k \le p \le 2k.
  \end{align*}
  \end{proposition}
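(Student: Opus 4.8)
The plan mirrors the partial fraction argument already used for the Stieltjes--Wigert case in Proposition \ref{p2.0+}. Substituting the expansion (\ref{F.1}) into the definition (\ref{e.1}) and summing the resulting geometric series gives
\begin{equation}
G_{2k}^{(d\text{-}qH)}(z) = q^{-k} \sum_{p=0}^{2k} c_p \sum_{N=1}^\infty (z q^p)^N = q^{-k} \sum_{p=0}^{2k} c_p \frac{z q^p}{1 - z q^p},
\end{equation}
which I would recognise as the partial fraction decomposition of $G_{2k}^{(d\text{-}qH)}(z)$ (the clean form, with no constant term, being guaranteed by $G_{2k}^{(d\text{-}qH)}(0) = 0$). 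Since the simple pole of $z q^p/(1 - z q^p)$ at $z = q^{-p}$ has residue $-q^{-p}$, matching residues yields $c_p = -q^{k+p}\,\mathrm{Res}_{z = q^{-p}} G_{2k}^{(d\text{-}qH)}(z)$, so the whole proposition reduces to locating and evaluating the poles of the closed form (\ref{30e+}).

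The second step is to expose the pole structure of (\ref{30e+}). Using $((qz)^2;q^2)_{k-1} = \prod_{j=1}^{k-1}(1 - q^j z)(1 + q^j z)$ and cancelling the factors $\prod_{j=1}^{k-1}(1 - q^j z)$ against $(qz;q)_{2k} = \prod_{j=1}^{2k}(1 - q^j z)$, and then absorbing $(1+z)$ into the numerator product, I obtain
\begin{equation}
G_{2k}^{(d\text{-}qH)}(z) = (q;q^2)_k \frac{z \prod_{j=0}^{k-1}(1 + q^j z)}{(1 - z) \prod_{j=k}^{2k}(1 - q^j z)}.
\end{equation}
The denominator shows that the only poles sit at $z = 1$ and at $z = q^{-p}$ for $k \le p \le 2k$. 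This settles the vanishing assertions at once: when $k > 1$ there is no pole at $z = q^{-p}$ for $1 \le p \le k-1$, forcing $c_1 = \cdots = c_{k-1} = 0$, whereas for $k = 1$ the poles at $z = 1, q^{-1}, q^{-2}$ give a genuinely nonzero $c_1$.

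The third step is to compute the two kinds of residue. At $z = 1$ one finds $c_0 = q^k (q;q^2)_k \prod_{j=0}^{k-1}(1 + q^j)/\prod_{j=k}^{2k}(1 - q^j)$, and the identities $\prod_{j=1}^{k-1}(1 + q^j) = (q^2;q^2)_{k-1}/(q;q)_{k-1}$, $\prod_{j=k}^{2k}(1 - q^j) = (q;q)_{2k}/(q;q)_{k-1}$ together with $(q;q)_{2k} = (q;q^2)_k (q^2;q^2)_k$ collapse this to $2 q^k/(1 - q^{2k})$. For $k \le p \le 2k$, removing the vanishing factor $1 - q^p z$ and differentiating gives
\begin{equation}
c_p = q^k (q;q^2)_k \frac{q^{-p} \prod_{j=0}^{k-1}(1 + q^{j-p})}{(1 - q^{-p}) \prod_{j=k,\, j \ne p}^{2k}(1 - q^{j-p})},
\end{equation}
after which it remains to rewrite these finite products as ratios of $q$-Pochhammer symbols, extracting the factor $(1 + q^p)/(1 - q^p)$ and collecting the overall sign and power of $q$.

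The main obstacle is this last algebraic reduction. The product $\prod_{j=k,\, j \ne p}^{2k}(1 - q^{j-p})$ has indices straddling $j = p$, so it splits into a part with negative exponents ($k \le j < p$) and a part with positive exponents ($p < j \le 2k$); each factor carrying a negative power of $q$ must be rewritten via $1 - q^{-m} = -q^{-m}(1 - q^m)$, which both flips a sign and contributes to the accumulated exponent. Careful bookkeeping of these contributions, together with the analogous treatment of $\prod_{j=0}^{k-1}(1 + q^{j-p})$, is exactly what is required to reach the sign $(-1)^{k+p-1}$, the exponent $p(p+1)/2 - 2pk + k(k-1)$, and the symbols $(q^2;q^2)_{p-1}$, $(q^2;q^2)_{p-k}$, $(q;q)_{p-1}$, $(q;q)_{2k-p}$ in the stated formula. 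I would pin down these constants by checking the outcome against the explicit small case $k = 1$, where (\ref{30e}) delivers $c_0, c_1, c_2$ directly.
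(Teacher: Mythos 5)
Your proposal follows essentially the same route as the paper: substitute (\ref{F.1}) into (\ref{e.1}) to get the partial fraction identity $q^k G_{2k}^{(d\text{-}qH)}(z)=z\sum_{p}c_pq^p/(1-zq^p)$, read off $c_p=-q^{p+k}\,{\rm Res}_{z=q^{-p}}G_{2k}^{(d\text{-}qH)}(z)$, and evaluate the residues from the product form (\ref{30e+}). Where the paper simply asserts that "the stated results follow," you add the genuinely useful intermediate step of factorising (\ref{30e+}) as $(q;q^2)_k\, z\prod_{j=0}^{k-1}(1+q^jz)\big/\big((1-z)\prod_{j=k}^{2k}(1-q^jz)\big)$, which makes the pole locations, and hence the vanishing of $c_p$ for $1\le p<k$ when $k>1$, transparent; your residue expressions at $z=1$ and $z=q^{-p}$ are correct, and the identities you quote do collapse $c_0$ to $2q^k/(1-q^{2k})$.

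One caution about the final bookkeeping you defer: carrying it out gives the exponent $p(p+1)/2-2pk+k^2$ rather than the $p(p+1)/2-2pk+k(k-1)$ printed in the statement, i.e.\ the displayed $k\le p\le 2k$ branch is off by a factor $q^{-k}$. Your own proposed sanity check at $k=1$ exposes this: (\ref{30e}) gives $c_2=(1+q^2)/(1-q^2)$, whereas the printed formula yields $q^{-1}(1+q^2)/(1-q^2)$ (the $c_0$ and $c_1$ entries are consistent with (\ref{F.1}) as written). So do not "pin down the constants" by forcing agreement with the printed exponent; trust the residue computation and the $k=1$ data.
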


   \begin{proof}
   Substituting (\ref{F.1}) in (\ref{e.1}) we have
 $$
 q^k G_{2k}^{(d\text{-}qH)}(z) =  z \sum_{p=0}^{2k} {c_p q^p \over 1 - z q^p}.
 $$
 Hence, with ${\rm Res}_{z = z_0} f(z)$ denoting the residue at $z = z_0$ of $f(z)$,
 $$
 c_p = - q^{p+k}  \, {\rm Res}_{z = q^{-p}} \, G_{2k}^{(d\text{-}qH)}(z).
 $$
 The product form of $G_{2k}^{(d\text{-}qH)}(z)$ (\ref{30e+}) allows for a simple computation of
 the residues, and the stated results follow.
 \end{proof}
 
 It is simple to check from the results of Proposition \ref{PM} that $c_p |_{q \mapsto q^{-1}}= - c_p$.  We can also check
 that  with
 $q = e^{- \lambda/N}$, dividing $c_p$ by $N$ leads to a well defined limit. Thus analogous to (\ref{14c+}) the scaled moments
 possess a $1/N^2$ expansion
 \begin{equation}\label{14c+G}   
 {1 \over N}  m_{k,N}^{(d\text{-}qH))}(q) \Big |_{q = e^{-\lambda/N}} =  \mu_{k,0}^{(d\text{-}qH))}(\lambda) + {1 \over N^2}  \mu_{k,2}^{(d\text{-}qH))}(\lambda) + {1 \over N^4}
  \mu_{k,4}^{(d\text{-}qH))}(\lambda) + \cdots.
   \end{equation} 
 Moreover, with $x= q^{-\lambda}$, we  read off from the results of Proposition   \ref{PM} that
  \begin{equation}\label{14c+H}   
  \lambda \mu_{k,0}(\lambda) = {2 \over k} - \delta_{k,1} x + (-1)^{k-1}
  \sum_{p=k \atop p \ne 1}^{2k} (-1)^p {2 \over p} {(2(p-1))!! (2k - 1)!! \over (2(p-k))!! (p-1)! (2k - p)!} x^p.
  \end{equation}
  However, it is not apparent if knowledge of this explicit expression can used to compute the corresponding
  scaled density analogous to (\ref{23c+})--(\ref{23e+}) in the Stieltjes-Wigert case.
  
  \begin{remark}
  It follows from (\ref{A.1b}) that
    \begin{equation}\label{m.0}
         \lim_{q \to 1^-}   (1 - q^2)^{-|\kappa|/2}   m_{2k,N}^{(d\text{-}qH)} =    m_{2k,N}^{(G)},
         \end{equation}
         where $ m_{2k,N}^{(G)}$ refers to the moments of the density for the Gaussian case
         of (\ref{1.1}). The form (\ref{F.1}) is not suitable for taking this limit, as the limit
         does not exist term-by-term. It is possible to set 
 $q=1$   in each term of (\ref{30d}).  Considering separately the $r$ odd and $r$ even terms shows
  \begin{equation}\label{m.1}
m_{2k,N}^{(G)} = (2k - 1)!! \sum_{r=0}^{k-1} (-1)^r \bigg \{
  \binom{N+2k-2r-1}{2k} +   \binom{N+2k-2r-2}{2k} \bigg \} \binom{k-1}{r}.
  \end{equation} 
  Another formula for which it is possible to directly set $q=1$ is 
  (\ref{30e+}). Doing this we see \cite[Eq.~(3-4)]{MPS20}
   \begin{equation}\label{m.2} 
  G_{2k}^{(d\text{-}qH)}(z) \Big |_{q=1} = (2k - 1)!!  {z (1 + z)^k \over (1 - z)^{k+2}}.
  \end{equation}
  Calculating the coefficient of $z^N$, which we do 
  this using the formula for the product of two power series, then implies
  \begin{equation}\label{m.3}  
 m_{2k,N}^{(G)} = (2k - 1)!!  \sum_{l=0}^{N-1} { (k+2)_l \over l!} \binom{k}{N-1-l}.
  \end{equation}  
  However, neither (\ref{m.1}) nor (\ref{m.3}) reveal the known special function form \cite{WF14, cunden19}
    \begin{equation}\label{m.4}
  m_{2k,N}^{(G)}  =  2^{-k} (2k - 1)!!  \, N \, {}_2 F_1(-k,1-N,2;2),
    \end{equation}     
  which moreoever is valid for continuous $k$ and satisfies the second order recurrence \cite{HZ86}
   \begin{equation}\label{m.5}
  (k+1)  m_{2k,N}^{(G)}  =  (2k-1) m_{2k - 2,N}^{(G)}  +  (k - 1/2) (k - 1) (k - 3/2) m_{2k - 4,N}^{(G)} .
  \end{equation}
  The open question along these lines is if $  m_{2k,N}^{(d\text{-}qH)} $ admits
  a $q$-special form analogous to (\ref{20b}), or satisfies a linear recurrence.
  \end{remark}
  
  \subsection{The little $q$-Laguerre weight}\label{S2.3}
  We consider next the case of (\ref{1.1}), supported on the $q$-lattice corresponding to (\ref{16d}) with $a=0$, $b=1$,
  and with the little $q$-Laguerre weight
   \begin{equation}\label{40.1}
   w^{(l\text{-}qL)}(x) = x^\alpha (qx;q)_\infty.
   \end{equation}  
   Most important for present purposes is that the moments of this weight have the simple explicit form  \cite{NS13}
 \begin{equation}\label{40.2}    
 \int_0^1    w^{(l\text{-}qL)}(x)  x^k \, d_q x = { (q;q)_\infty \over  (q^{\alpha + 1};q)_\infty } (q^{\alpha + 1}; q)_k.
  \end{equation}

  \begin{proposition}
  We have
   \begin{equation}\label{40.2}  
   \langle s_\kappa \rangle^{(l\text{-}qL)}   = \prod_{j=1}^N {(q^{\alpha+1};q)_{\kappa_j + N - j} \over (q^{\alpha+1};q)_{ N - j} }
   \prod_{1 \le j < l \le N} { q^{\kappa_j + N - j} -  q^{\kappa_l +N - l}  \over q^{N-j} - q^{N-l} }.
  \end{equation}
  \end{proposition}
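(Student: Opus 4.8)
The plan is to follow the same determinantal route as in the proof of Proposition \ref{p2.1}, which is in fact cleaner here since the little $q$-Laguerre moments \eqref{40.2} are all nonzero, so no block/parity structure intervenes. First I would introduce the unnormalised average
\[
I_\kappa^{(l\text{-}qL)} := \frac{1}{N!}\int_0^1 d_qx_1\, w^{(l\text{-}qL)}(x_1)\cdots\int_0^1 d_qx_N\, w^{(l\text{-}qL)}(x_N)\, s_\kappa(x_1,\dots,x_N)\!\!\prod_{1\le j<k\le N}\!\!(x_k-x_j)^2,
\]
so that $\langle s_\kappa\rangle^{(l\text{-}qL)} = I_\kappa^{(l\text{-}qL)}/I_{(0)}^{(l\text{-}qL)}$. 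Using the Vandermonde identity \eqref{21b} and the determinantal form \eqref{17b} to rewrite the integrand as the product of $\det[x_j^{\kappa_{N+1-k}+k-1}]$ and $\det[x_j^{k-1}]$, and then applying Andr\'eief's identity \cite{An86,Fo18}, collapses the $N$-fold integral to $I_\kappa^{(l\text{-}qL)} = \det[\int_0^1 w^{(l\text{-}qL)}(x)\, x^{\kappa_{N+1-j}+j+k-2}\, d_qx]_{j,k=1}^N$.

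Next I would substitute the explicit moment formula \eqref{40.2}. Writing $a_j := \kappa_{N+1-j}+j-1$ so that the exponent reads $a_j+(k-1)$, every entry acquires the common prefactor $(q;q)_\infty/(q^{\alpha+1};q)_\infty$ and reduces to $(q^{\alpha+1};q)_{a_j+k-1}$. Pulling these $N$ prefactors out of the determinant and using the factorisation $(q^{\alpha+1};q)_{a_j+k-1} = (q^{\alpha+1};q)_{a_j}(q^{\alpha+1+a_j};q)_{k-1}$ (the base-$q$ analogue of \eqref{23a}) leaves $\prod_{j=1}^N (q^{\alpha+1};q)_{a_j}$ times $\det[(q^{\alpha+1+a_j};q)_{k-1}]_{j,k=1}^N$. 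The surviving determinant is of the generalised-Vandermonde type \eqref{23b}: since $(c;q)_{k-1}$ is a polynomial of degree $k-1$ in $c$, the base-$q$ version of \eqref{23c} evaluates it as $\prod_{1\le k<l\le N}(q^{\alpha+1+a_k}-q^{\alpha+1+a_l})$ times a power of $q$ independent of $\kappa$; factoring $q^{\alpha+1}$ out of each of the $\binom{N}{2}$ differences reduces this to $\prod_{k<l}(q^{a_k}-q^{a_l})$ up to a further $\kappa$-independent constant.

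Finally I would form the ratio $I_\kappa^{(l\text{-}qL)}/I_{(0)}^{(l\text{-}qL)}$, in which all the $\kappa$-independent prefactors cancel, leaving a single-product factor $\prod_j (q^{\alpha+1};q)_{a_j}/(q^{\alpha+1};q)_{a_j^{(0)}}$ and a Vandermonde ratio $\prod_{k<l}(q^{a_k}-q^{a_l})/(q^{a_k^{(0)}}-q^{a_l^{(0)}})$, where $a_j^{(0)}=j-1$ is the empty-partition value. Reindexing $j\mapsto N+1-j$ converts $a_{N+1-j}=\kappa_j+N-j$ and $a_{N+1-j}^{(0)}=N-j$, turning the single product into $\prod_j (q^{\alpha+1};q)_{\kappa_j+N-j}/(q^{\alpha+1};q)_{N-j}$ and the Vandermonde ratio into $\prod_{j<l}(q^{\kappa_l+N-l}-q^{\kappa_j+N-j})/(q^{N-l}-q^{N-j})$; flipping the sign of every factor in numerator and denominator (an equal number $\binom{N}{2}$ of each) restores the ordering displayed in \eqref{40.2}, which is the claim.

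I expect no serious obstacle, as the argument parallels Proposition \ref{p2.1}; the only points requiring care are citing (or recording) the base-$q$ form of the determinant evaluation \eqref{23c}, and tracking the sign introduced by the reindexing $j\mapsto N+1-j$ in the Vandermonde factor.
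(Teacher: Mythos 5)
Your proposal is correct and follows essentially the same route as the paper's proof: Vandermonde substitution and Andr\'eief's identity to reduce to a moment determinant, the factorisation $(q^{\alpha+1};q)_{a_j+k-1}=(q^{\alpha+1};q)_{a_j}(q^{\alpha+1+a_j};q)_{k-1}$, the generalised Vandermonde evaluation of (\ref{23c}) in base $q$, and normalisation by the empty partition. Your extra care about the reindexing $j\mapsto N+1-j$ and the cancelling signs in the Vandermonde ratio is sound but adds nothing beyond what the paper's terser argument already implies.
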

  
  \begin{proof}
  Denote 
   \begin{equation}\label{21a+} 
  I_\kappa^{(l\text{-}qL)} := {1 \over N!} \int_{0}^1 d_q x_1 \, w^{(l\text{-}qL)}(x_1) \cdots  \int_{0}^1 d_q x_N \, w^{(l\text{-}qL)}(x_N) \, s_\kappa(x_1,\dots,x_N)
  \prod_{1 \le j < k \le N} (x_k - x_j)^2.
  \end{equation}  
  Analogous to (\ref{21c}) we have
 \begin{align}\label{21c+} 
  I_\kappa^{(l\text{-}qL)}  
  & = \det \Big [  \int_{0}^1   w^{(l\text{-}qL)}(x)  x^{\kappa_{N+1-j} + j + k - 2} \, d_q x \Big ]_{j,k=1}^N \nonumber \\
  & =  \Big (   { (q;q)_\infty \over  (q^{\alpha + 1};q)_\infty } \Big )^N \prod_{j=1}^N (q^{\alpha + 1}; q)_{\kappa_{N+j-1} + j - 1}
   \det \Big [  ( q^{\alpha + \kappa_{N+1-j} + j}; q)_{k-1} \Big ]_{j,k=1}^N.
  \end{align}
Use now of the determinant evaluation noted in (\ref{23c}), and normalising to unity for the empty partition, gives
(\ref{40.2}).
\end{proof}

We next make use of (\ref{13a+}) to deduce from (\ref{40.2}) the moments of the corresponding density.

\begin{proposition}\label{P2.12}
Let  the $q$-hypergeometric function ${}_3 \phi_2$ have its usual meaning.
For $k$ a positive integer we have
   \begin{equation}\label{S.4f}
  m_{k,N}^{(l\text{-}qL)}  =  A_{k,N} \,  {}_3 \phi_2  \Big ( {q^{-(k-1)}, q^{-(N-1)}, q^{-(\alpha+N - 1)}  \atop q^{- (N  + k - 1)},  q^{- (N  + k + \alpha - 1)} } \Big | q; q^{-k} \Big ),
  \end{equation} 
  where
   \begin{equation}\label{S.4e}       
 A_{k,N}  :=    { (q;q)_{N + k - 1 + \alpha}   (q;q)_{N + k - 1 }    \over
   (q;q)_{N-1} (q;q)_{k}   (q;q)_{N  +\alpha - 1} }.
  \end{equation}   
  \end{proposition}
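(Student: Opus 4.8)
The plan is to combine the power-sum expansion (\ref{13b+}) with the Schur-average evaluation (\ref{40.2}), specialised to the hook partitions $\kappa = (k-r,1^r)$. First I would substitute $\kappa = (k-r,1^r)$ into (\ref{40.2}) and record which factors survive. In the single product $\prod_{j=1}^N (q^{\alpha+1};q)_{\kappa_j+N-j}/(q^{\alpha+1};q)_{N-j}$ only $j=1$ (the arm, $\kappa_1=k-r$) and $j=2,\dots,r+1$ (the leg, $\kappa_j=1$) contribute; the leg factors are each $1-q^{\alpha+N-j+1}$ and telescope to $(q^{\alpha+N-r};q)_r$. After re-expressing the surviving $j=1$ factor through $(q;q)$-symbols, this single product collapses to the compact form $(q;q)_{\alpha+N+k-r-1}/(q;q)_{\alpha+N-r-1}$.

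The substantive step is the double product $\prod_{j<l}(q^{\kappa_j+N-j}-q^{\kappa_l+N-l})/(q^{N-j}-q^{N-l})$. Here I would split the index pairs $(j,l)$ according to the three blocks $A=\{1\}$, $B=\{2,\dots,r+1\}$, $C=\{r+2,\dots,N\}$. Pairs inside $C$ contribute $1$; pairs inside $B$ each contribute a factor $q$, giving $q^{\binom{r}{2}}$; and the three mixed families $A$--$B$, $A$--$C$, $B$--$C$ each reduce, after cancelling the common power of $q$ in numerator and denominator, to a telescoping product of ratios $(1-q^{m+1})/(1-q^m)$. Carrying out the telescoping and collecting the residual power of $q$ (the $A$--$B$ family supplies an extra $q^r$, which combines with the $q^{\binom{r}{2}}$ from the $B$--$B$ pairs to give $q^{\binom{r+1}{2}}$) yields a closed form for $T_r := \langle s_{(k-r,1^r)}\rangle^{(l\text{-}qL)}$ entirely in terms of $(q;q)$-symbols.

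With $T_r$ in hand I would verify that the $r=0$ term reproduces $A_{k,N}$, so that $m_{k,N}^{(l\text{-}qL)} = A_{k,N}\sum_r (-1)^r T_r/T_0$, and then form the ratio $T_r/T_0$. To recognise this ratio as the $r$-th summand of the ${}_3\phi_2$ in (\ref{S.4f}), the key tool is the standard inversion $(q^{-n};q)_r = (-1)^r q^{\binom{r}{2}-nr}(q;q)_n/(q;q)_{n-r}$, applied to each of the five parameters $q^{-(k-1)}, q^{-(N-1)}, q^{-(\alpha+N-1)}$ and $q^{-(N+k-1)}, q^{-(N+k+\alpha-1)}$. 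Matching the resulting $(q;q)$-symbols against those of $T_r/T_0$ then pins down every factor; the accumulated powers of $q$ together with the argument $(q^{-k})^r$ reproduce exactly $q^{\binom{r+1}{2}}$, while the accumulated signs contribute precisely the $(-1)^r$ needed to absorb the alternating sign in (\ref{13b+}). Termination of the series at $r=\min(k-1,N-1)$ is automatic, since $(q^{-(k-1)};q)_r$ and $(q^{-(N-1)};q)_r$ vanish beyond that range.

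I expect the double-product telescoping to be the main obstacle, both because the three mixed block-families must be handled separately and because the bookkeeping of the prefactor powers of $q$ — ensuring they assemble into the single clean exponent $\binom{r+1}{2}$ — is the most error-prone part. By contrast, the final identification with the ${}_3\phi_2$ is a routine application of the $(q^{-n};q)_r$ inversion once the closed form for $T_r$ has been secured.
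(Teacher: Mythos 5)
Your proposal is correct and follows essentially the same route as the paper: expand $m_{k,N}^{(l\text{-}qL)}$ over hook partitions via (\ref{13b+}), specialise the Schur average (\ref{40.2}) to $\kappa=(k-r,1^r)$, and then use the inversion $(q;q)_{m-r}=(-1)^r q^{r(r+1)/2}q^{-r(m+1)}(q;q)_m/(q^{-m};q)_r$ (the paper's (\ref{S.4a})) to recognise the $r$-th term as a ${}_3\phi_2$ summand with prefactor $A_{k,N}$. The only divergence is a local one: where you evaluate the double product by a direct three-block telescoping, the paper short-circuits this step by observing via (\ref{S.1}) that it coincides, up to prefactors and $q\mapsto q^{-1}$, with the Stieltjes--Wigert computation already recorded in (\ref{20b+}).
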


\begin{proof}
According to (\ref{13a+}) we want to simplify (\ref{40.2}) for partitions $\kappa = (k-r,1^r)$. Noting that
\begin{equation}\label{S.1}
\prod_{1 \le j < l \le N}  { q^{\kappa_j + N - j} -  q^{\kappa_l +N - l}  \over q^{N-j} - q^{N-l} } =
q^{\sum_{l=1}^N ( l - 1) \kappa_l}
\prod_{1 \le j < l \le N} {1 - q^{\kappa_j - \kappa_l + l - j} \over 1 - q^{l-j}},
  \end{equation} 
  we see from (\ref{20a}) that the contribution from this factor in (\ref{40.2}) is, up to the prefactors, and up to
  mapping $q \mapsto q^{-1}$, the same as for the Stieltjes-Wigert weight. Thus we read off from the first equality in
  (\ref{20b+}) that the contribution of (\ref{S.1}) is
 \begin{equation}\label{S.2} 
 q^{r(r+1)/2}  \bigg [ {N + k - r - 1 \atop k} \bigg ]_{q}
   \bigg [ {k- 1 \atop r} \bigg ]_{q}  = 
  q^{r(r+1)/2}  {(q;q)_{k-1} \over   (q;q)_{k}  (q;q)_{r} } { (q;q)_{N+k-r-1} \over  (q;q)_{N-r-1} (q;q)_{k-r-1} },
    \end{equation} 
    where on the RHS we have separated terms of the form $(q;q)_{s-r-1}$ for some $s$ independent of $r$.
    For the remaining factor in (\ref{40.2}), which involves a single product, the contribution can be written
 \begin{equation}\label{S.4}     
  { (q;q)_{k-r+N-1+\alpha} \over (q;q)_{N  +\alpha - 1} } \prod_{l=1}^r (1 - q^{\alpha + N - l}) = 
  { (q;q)_{k-r+N-1+\alpha} \over (q;q)_{N  +\alpha - 1} } (-1)^r q^{r(\alpha + N) - r (r+1)/2} (q^{-(\alpha+N)+1}; q)_r.
  \end{equation} 
  
  For $r$ a non-negative integer, we have the general formula
  \begin{equation}\label{S.4a}    
  (q;q)_{m-r} = (-1)^r q^{r (r+1)/2} q^{- r (m + 1)} { (q;q)_m \over (q^{-m};q)_r},
  \end{equation} 
  which we recognise as closely related to the manipulation used to obtain the equality in (\ref{S.4}).
  This formula can be used to rewrite the terms in the product of (\ref{S.2}) and (\ref{S.4}) of the functional
  form $(q;q)_{m-r}$. Thus we have
  \begin{multline}\label{S.4b} 
   {  (q;q)_{k-r+N-1+\alpha} (q;q)_{N+k-r-1} \over  (q;q)_{N-r-1} (q;q)_{k-r-1} }  = q^{-r (k + N + \alpha)} \\
   \times { (q;q)_{N + k - 1 + \alpha}   (q;q)_{N + k - 1 }    \over
   (q;q)_{N-1} (q;q)_{k-1}}
   { (q^{-(k-1)}; q)_r (q^{-(N-1)};q)_r \over    ( q^{- (N  + k + \alpha - 1)}; q)_r   ( q^{- (N  + k - 1)}; q)_r }.
   \end{multline}
   The terms in (\ref{S.2}) and (\ref{S.4}) excluding those on the LHS of  (\ref{S.4b}) are
   \begin{equation}\label{S.4c}   
   (-1)^r q^{r (\alpha + N)} { (q;q)_{k-1}    (q^{-(\alpha+N)+1}; q)_r   \over  (q;q)_{k}   (q;q)_{r} (q;q)_{N  +\alpha - 1} }.
     \end{equation} 
     
     Multiplying together (\ref{S.4c}) and the RHS of (\ref{S.4b}) gives us a structured evaluation of
     $\langle s_{(k-r,1^r)} \rangle^{(l\text{-}qL)} $. Forming the sum as required by (\ref{13a+}),
     and specifying $A_{k,N}$ by (\ref{S.4e}), shows
   \begin{equation}\label{S.4d}     
   m_{k,N}^{(l\text{-}qL)}  = A_{k,N} \sum_{r=0}^{k-1}  {  (q^{-(k-1)}; q)_r  (q^{-(N-1)};q)_r      (q^{-(\alpha+N - 1)}; q)_r \over
      ( q^{- (N  + k - 1)}; q)_r    ( q^{- (N  + k + \alpha - 1)}; q)_r}   {q^{-r k} \over  (q;q)_{r}  }.
  \end{equation}    
 In standard $q$-hypergeometric series notation, this is (\ref{S.4f}).
   \end{proof}
   
   From the identity \cite[Eq.~17.3.2]{DLMF} and corresponding limit
   \begin{equation}\label{S.5}
   (x;q)_\infty = \sum_{n=0}^\infty { q^{n(n-1)/2} (-x)^n \over (q;q)_n }, \quad \lim_{q \to 1^-} (x(1-q);q)_\infty = e^{-x},
    \end{equation}     
    the definition (\ref{40.1}) of the little $q$-Laguerre weight, and (\ref{A.1}) with $a = 1/(1 - q)$, we see that
  \begin{equation}\label{S.5a}   
  \lim_{q \to 1^-} (1 - q)^{-|\kappa|} \langle s_\kappa \rangle^{(l\text{-}qL)}  = \langle s_\kappa \rangle^{(L)}.
   \end{equation} 
   Here the average on the RHS is with respect to (\ref{1.1}) with the Laguerre weight $w^{(L)}(x)$ from (\ref{1.1b}).
   It follows from  (\ref{13a+}) that the moments in the little $q$-Laguerre ensemble are likewise related to those in the
   Laguerre ensemble by
     \begin{equation}\label{S.5b}  
    \lim_{q \to 1^-} (1 - q)^{-|\kappa|}     m_{k,N}^{(l\text{-}qL)}  = m_{k,N}^{(L)}.
   \end{equation} 
   Computing this limit from   (\ref{S.4f}) shows
  \begin{equation}\label{S.5c} 
  m_{k,N}^{(L)} =    {(N+k-1+\alpha)! (N+k-1)! \over (N-1)! k! (N + \alpha - 1)!} \,
  {}_3 F_2 \bigg ( {-(k-1), \,- (N - 1), \,- (\alpha + N - 1) \atop - (N + k - 1), \, -(N+k+ \alpha - 1)} \Big | 1 \bigg ),
  \end{equation}
  which agrees with a known result \cite[displayed equation below (4.16)]{cunden19}.
  After transformation, (\ref{S.5c}) can be recognised as an example of a particular continuous dual Hahn
  polynomial of degree $N-1$ \cite{cunden19}, which in fact remains true for continuous $k$. In the
  case of $k$ a non-negative integer, it can be recognised as a continuous Hahn polynomial of degree
  $k-1$, and the three term recurrence of this class of orthogonal polynomial implies the known
  three term recurrence for the Laguerre ensemble moments \cite{HT03},
   \begin{equation}\label{S.5d}  
   (k+2) m_{k+1,N}^{(L)}   = (2k+1) (2N + \alpha)  m_{k,N}^{(L)}   + (k-1)(k^2 - \alpha^2)  m_{k-1,N}^{(L)}.
   \end{equation}
   In the little $q$-Laguerre case, the form    (\ref{S.4f})  does not reveal identification with a $q$-orthogonal
   polynomial, and the question as to whether $\{  m_{k,N}^{(l\text{-}qL)}  \}$ satisfies a linear recurrence remains
   open.
   
   Inspection of (\ref{S.4f}) shows the moments $m_{k,N}^{(l\text{-}qL)}$ permit the expansion
     \begin{equation}\label{S.5e}   
 m_{k,N}^{(l\text{-}qL)} = \sum_{s_1 = 0}^k    \sum_{s_2 = 0}^{2k} c_{s_1, s_2} q^{s_1 \alpha + s_2 N},
  \end{equation}
  where the coefficients $\{ c_{s_1, s_2}  \}$ are Laurent polynomials independent of $\alpha, N$.
  However, unlike the analogous expansion in the Stieltjes-Wigert and discrete $q$-Hermite cases 
  (\ref{34.2}) and (\ref{F.1}), we have no access to a general explicit evaluation of these coefficients. Equivalently,
  we are not able to find a closed form for the generating function $ \sum_{N, \alpha = 0}^\infty w^{N} z^{\alpha}
   m_{k,N}^{(l\text{-}qL)}$. For small values of $k$ (\ref{S.4f}) can be used for this purpose, and indicates that
   $q^{-1/2}   c_{s_1, s_2} $ is a series in $\{ (q^{-n/2} - q^{n/2}) \}_{n=1,2,\dots}$ and thus the scaled
   coefficients $(q^{-1/2} / N)
  c_{s_1, s_2}|_{q = e^{\lambda/N}} $ admit a $1/N^2$ expansion analogous to (\ref{23a+}) and (\ref{14c+G}).

\section{Pearson equation and Ledoux's results}\label{S3}

In this section we revise Ledoux's results in \cite{ledoux04,ledoux05} on the consequences of the Pearson equation
in the continuous and linear discrete classical cases, for the analyse of integrability properties of (\ref{18b}).
 To begin, we give a brief introduction to the Pearson equation and its consequences in random matrix theory
 and related topics, 
  and then treat the continuous case and linear discrete case separately. 

\subsection{The Pearson equation and applications}
The structured differential relation for the weight function 
\begin{align}\label{pearson}
(\sigma(x) w(x))'=\tau(x)w(x),
\end{align}
 is to be referred to as the Pearson equation in the continuous case. Here it is assumed that $\sigma(x) w(x)$ decays upon approaching
 the boundary of the support of $w(x)$.
 As is consistent with (\ref{1.1a}) --- this is the original form of the Pearson differential equation \cite{Pe95} ---  the classical weights are when  $\sigma$ and $\tau$ are polynomials
 of degrees less than 2 and 1 respectively.
 The Pearson equation has been generalised to discrete measures/nonuniform lattices; see for example \cite{ismail05,koekoek10,nikiforov86}. It was proved that for the discrete classical orthogonal polynomials, the weight function $w(s)$ should satisfy a discrete analogue of the Pearson equation
\begin{align*}
\frac{\Delta}{\Delta x(s-\frac{1}{2})}[\sigma(s)w(s)]=\tau(s)w(s),
\end{align*}
where $\Delta$ is the forward difference operator corresponding to the lattice, and $x(s)$ denotes the position of the lattice. Interesting examples include the linear discrete lattice and exponential discrete lattice. The former one induces the so-called discrete orthogonal polynomials, whose weight functions satisfy a linear discrete equation
\cite{nikiforov86}
\begin{align}\label{dpe}
\frac{w(x+1)}{w(x)}=\frac{\sigma(x)+\tau(x)}{\sigma(x+1)}.
\end{align}
and the latter one corresponds to the $q$-orthogonal polynomials, and their weight functions satisfy a $q$-difference equation (\ref{16b}).

In addition to its applications in the classification of classical orthogonal polynomials, the Pearson equation
 has seen a number of applications in the contexts of complex analysis and random matrix theory.
One class of application is to the evaluation of classical beta integrals \cite{rahman94} and Barnes and Ramanujan-type integrals on the $q$-linear lattice \cite{rahman942}. Applications in random matrix theory up to the year 2010 have been discussed in the monograph \cite{forrester10}.  The most prominent  is to connect classical orthogonal polynomials to particular
 skew orthogonal polynomials. This result was used to evaluate the Christoffel-Darboux kernel for orthogonal and symplectic ensembles with classical weights, and to show in particular that
 they are rank one perturbations of the kernel for the corresponding unitary ensemble \cite{adler00}. Subsequently the $q$-Pearson equation has been used to 
 generalise this theory, for the symplectic ensembles,  to the classical $q$-cases \cite{forrester20,SLY21}.
Differential identities that follow form the Pearson equation in the classical cases  \cite[Sec. 5.4]{forrester10} have  found application in the derivation
of formulas  for the structure function --- that is the covariance of the linear statistics $\sum_{j=1}^N \exp(ik_1\lambda_j)$, $\sum_{j=1}^N \exp(-ik_2\lambda_j)$  ---
in the Gaussian and Laguerre unitary ensembles of random matrix theory  \cite{forrester21a,forrester21b}.
 Another recent development has been in relation to the analyse of the partition function of particular  two-component log-gas systems on a line with charges $+1$ and $+2$  by making use of the Pearson pair in the classical cases \cite{forrester212}.

\subsection{Continuous case}\label{S3.2}
The continuous orthogonal polynomials $\{P_n(x)\}_{n\in\mathbb{N}}$, orthogonal with respect to the weight function $w(x)$, the Pearson pair $(\sigma,\tau)$ in \eqref{pearson} induces a second order differential equation
\begin{align}\label{de}
\sigma(x)P_N''(x)+\tau(x)P_N'(x)+\lambda_NP_N(x)=0,
\end{align}
where $\lambda_N$ is a constant independent of $x$. Since $\sigma(x) w(x)$ in the Pearson pair is required to decay fast at the ends of supports, it follows that for arbitrary $f,\,g\in L^2(d\mu,[a,b])$ there is an integration by part formula
\begin{align}\label{cibp}
\int_a^b \sigma(x) f'(x)g(x)d\mu=-\int_a^b \sigma(x) f(x)g'(x)d\mu-\int_a^b \tau(x) f(x)g(x)d\mu,
\end{align}
where $d\mu:=w(x)dx$ and $[a,b]$ is the support of $d\mu$.

The study \cite{ledoux04}  took up the task of finding
a differential equation satisfied by the (un-normalised) probability density $P_N^2(x) d\mu$, or more particularly its Laplace
transform (for earlier works relating to studies of the differential equation satisfied by products of classical orthogonal polynomials
see \cite{HBR00} and references therein).
Denote the Laplace transform of the density as
\begin{align*}
\phi(s):=\int \lp P_N^2(x)d\mu,
\end{align*}
then by making advantage of the integration by part formula \eqref{cibp} and taking $f=\lp$ and $g=P_N^2$, one can find
\begin{subequations}
\begin{align}
&s\int \lp \sigma(x) P_N^2(x)d\mu=-\int \lp \tau(x) P_N^2(x)d\mu-\int \lp \sigma(x) (P_N^2(x))'d\mu,\label{lp1}\\
&s\int\lp\sigma(x) P_N(x)P_N'(x)d\mu=-\int \lp \sigma(x) (P_N'(x))^2d\mu+\lambda_N \int\lp P_N^2(x)d\mu,\label{lp2}\\
&s\int \lp \sigma(x) (P_N'(x))^2d\mu=\int  \lp \tau(x) (P_N'(x))^2d\mu+2\lambda_N\int \lp P_N(x)P_N'(x)d\mu, \label{lp3}
\end{align}
\end{subequations}
if one take $f=e^{sx}$ and $g$ as $P_N^2(x)$, $(P_N^2(x))'$ and $(P_N'(x))^2$ respectively.

To characterize the differential equation satisfied by such Laplace transform, one can introduce a differential operator $\mathbf{g}$ corresponding to a polynomial $g(x)=\sum_{i=0}^k g_ix^i$, such that
\begin{align*}
\mathbf{g}(\phi)=\int e^{sx}g(x)P_N^2(x)d\mu.
\end{align*}
By some simple calculation, one verifies that $\mathbf{g}=\sum_{i=0}^k g_i\partial_s^i$.
Moreover, such a differential operator $\mathbf{g}$ satisfies the Leibniz rule
\begin{align}\label{oef}
\mathbf{g}(s^k\phi)=\sum_{i=0}^k {k\choose i} s^i\mathbf{g}^{(k-i)}(\phi),
\end{align}
where $\mathbf{g}^{(i)}$ is the operator corresponding to $g^{(i)}(x)$.
Therefore, one can define operators $ \mathbb{B} $ and $ \mathbb{A} $ related to $\sigma(x)$ and $\tau(x)$  via
\begin{align*}
	\mathbb{B}\phi(s)=\int \lp \sigma(x) P_N^2(x)d\mu,\quad \mathbb{A}\phi(s)=\int \lp \tau(x) P_N^2(x)d\mu,
\end{align*}
allowing equations \eqref{lp1}-\eqref{lp3} to be rewritten into the forms
\begin{subequations}
\begin{align}
	(s&\mathbb{B}+\mathbb{A})\phi =-2\mathbb{B}\int\lp P_N(x)P_N'(x)d\mu\label{eq1}\\
	s&\mathbb{B}\int\lp P_N(x)P_N'(x)d\mu=-\mathbb{B}\int\lp(P_N'(x))^2+\lambda_N\phi\label{eq2}\\
	(s&\mathbb{B}-\mathbb{A})\int\lp(P_N'(x))^2d\mu=2\lambda_N\int\lp P_N(x)P_N'(x)d\mu.\label{eq3}
\end{align}
\end{subequations}
Acting $ s\mathbb{B} $ on equation (\ref{eq3}) and substituting $ s\mathbb{B}\int\lp P_N(x)P_N'(x)d\mu $ by (\ref{eq2}) yields
\begin{align*}
	s\mathbb{B}(s\mathbb{B}-\mathbb{A})\int\lp(P_N')^2d\mu=-2\lambda_N\mathbb{B}\int\lp(P_N')^2+2\lambda_N^2\phi.
\end{align*}
By using (\ref{oef}), we can rewrite this as
\begin{align*}
	(s^2\mathbb{B}+s\mathbb{B}^{(1)}-s\mathbb{A}+2\lambda_N)\mathbb{B}\int\lp(P_N')^2d\mu=2\lambda_N^2\phi.
\end{align*}
Moreover, from \eqref{eq1} and \eqref{eq2}, one knows that
\begin{align*}
(s^2\mathbb{B}+s\mathbb{A}+2\lambda_N)\phi=2\mathbb{B}\int\lp (P_N')^2d\mu,
\end{align*}
thus acting $ (s^2\mathbb{B}+s\mathbb{B}^{(1)}-s\mathbb{A}+2\lambda_N) $ on the above equation, one finally gets
\begin{align*}
	(s^2\mathbb{B}+s\mathbb{B}^{(1)}-s\mathbb{A}+2\lambda_N)(s^2\mathbb{B}+s\mathbb{A}+2\lambda_N)\phi=4\lambda_N^2\phi.
\end{align*}
\begin{proposition}\cite[Corollary 3.2]{ledoux04}\label{P3.1}
The Laplace transform of the (un-normalised) probability density $P_N^2(x) w(x)$ satisfies the differential equation
\begin{align}\label{cle}
(s^4\mathbb{M}_4+s^3\mathbb{M}_3+s^2\mathbb{M}_2+s\mathbb{M}_1)\phi=0,
\end{align}
where $\{\mathbb{M}_i\}_{i=1}^4$ are defined by
\begin{align*}
&\mathbb{M}_1=\mathbb{B}^{(2)}\mathbb{A}-\mathbb{A}^{(2)}\mathbb{B}-\mathbb{A}^{(1)}\mathbb{A}+2\lambda_N\mathbb{B}^{(1)},\\
&\mathbb{M}_2=3\mathbb{B}^{(2)}\mathbb{B}+2\mathbb{B}^{(1)}\mathbb{A}+4\lambda_N\mathbb{B}-\mathbb{A}^2-2\mathbb{A}^{(1)}\mathbb{B},\\
&\mathbb{M}_3=3\mathbb{B}^{(1)}\mathbb{B},\quad \mathbb{M}_4=\mathbb{B}^2.
\end{align*}
\end{proposition}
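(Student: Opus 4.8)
The plan is to start from the factored operator identity
\[
(s^2\mathbb{B}+s\mathbb{B}^{(1)}-s\mathbb{A}+2\lambda_N)(s^2\mathbb{B}+s\mathbb{A}+2\lambda_N)\phi=4\lambda_N^2\phi,
\]
already derived above by eliminating the auxiliary transforms $\int e^{sx}\sigma(x)P_N P_N'\,d\mu$ and $\int e^{sx}\sigma(x)(P_N')^2\,d\mu$ from the coupled system \eqref{eq1}--\eqref{eq3}. The only remaining work is to expand the composition of the two operators on the left, collect the outcome by powers of $s$, and identify the coefficients with $\mathbb{M}_4,\dots,\mathbb{M}_1$.

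Two structural facts make the expansion routine. First, because $\tau$ and $\sigma$ have degrees at most $1$ and $2$, the operators are constant-coefficient polynomials in $\partial_s$, namely $\mathbb{A}=\tau_0+\tau_1\partial_s$ and $\mathbb{B}=\sigma_0+\sigma_1\partial_s+\sigma_2\partial_s^2$ (so that $\mathbb{B}^{(3)}=0$); in particular $\mathbb{A}$, $\mathbb{B}$ and all their formal derivatives $\mathbb{A}^{(i)}$, $\mathbb{B}^{(j)}$ commute with one another. Second, these operators do not commute with multiplication by $s$, and the one tool needed to move an operator past a power of $s$ is the Leibniz rule \eqref{oef}, which gives $\mathbb{B}(s^2\psi)=s^2\mathbb{B}\psi+2s\mathbb{B}^{(1)}\psi+\mathbb{B}^{(2)}\psi$ and $\mathbb{A}(s\psi)=s\mathbb{A}\psi+\mathbb{A}^{(1)}\psi$, with analogous expansions for the other blocks.

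Concretely, I would first apply the inner operator to get $s^2\mathbb{B}\phi+s\mathbb{A}\phi+2\lambda_N\phi$, and then apply each of the four pieces $s^2\mathbb{B}$, $s\mathbb{B}^{(1)}$, $-s\mathbb{A}$, $2\lambda_N$ of the outer operator, invoking \eqref{oef} to commute every $\mathbb{B}$, $\mathbb{B}^{(1)}$, $\mathbb{A}$ through the accompanying $s^2$ or $s$. Collecting by powers of $s$ and using the mutual commutativity to merge equal compositions, the cross terms telescope: at order $s^3$ the combination $\mathbb{B}\mathbb{A}-\mathbb{A}\mathbb{B}$ cancels and only $3\mathbb{B}^{(1)}\mathbb{B}$ remains, while the would-be $\mathbb{B}^{(3)}\mathbb{B}$ contribution at order $s^1$ drops out since $\mathbb{B}^{(3)}=0$. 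The coefficients of $s^4$, $s^3$, $s^2$, $s^1$ then collapse exactly to $\mathbb{M}_4=\mathbb{B}^2$, $\mathbb{M}_3=3\mathbb{B}^{(1)}\mathbb{B}$, and the stated $\mathbb{M}_2$, $\mathbb{M}_1$.

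The single point that genuinely must be checked, and on which the form of \eqref{cle} hinges, is the constant ($s^0$) term: among all the expanded pieces the only one carrying no factor of $s$ is the $4\lambda_N^2\phi$ produced by $2\lambda_N\cdot 2\lambda_N\phi$, and this is precisely the right-hand side of the factored identity. The constant terms thus cancel, no $\mathbb{M}_0$ survives, and the equation reduces to $(s^4\mathbb{M}_4+s^3\mathbb{M}_3+s^2\mathbb{M}_2+s\mathbb{M}_1)\phi=0$. I expect the main obstacle to be nothing conceptual but rather the careful bookkeeping of the \eqref{oef} commutations: each operator has to be pushed past the correct power of $s$ with the right binomial weights, and a single misplaced factor would spoil either the $s^0$ cancellation or the matching of $\mathbb{M}_1$ and $\mathbb{M}_2$.
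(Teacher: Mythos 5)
Your proposal is correct and follows essentially the same route as the paper: the paper's own argument consists precisely of deriving the factored identity $(s^2\mathbb{B}+s\mathbb{B}^{(1)}-s\mathbb{A}+2\lambda_N)(s^2\mathbb{B}+s\mathbb{A}+2\lambda_N)\phi=4\lambda_N^2\phi$ from \eqref{eq1}--\eqref{eq3} and then reading off the proposition by expanding with the Leibniz rule \eqref{oef}, exactly as you describe. I checked the bookkeeping you flag as the only delicate point (the binomial weights in \eqref{oef}, the vanishing of $\mathbb{B}^{(3)}$, the cancellation $\mathbb{B}\mathbb{A}-\mathbb{A}\mathbb{B}=0$ at order $s^3$, and the cancellation of the lone $s^0$ term $4\lambda_N^2\phi$ against the right-hand side) and it all works out to the stated $\mathbb{M}_1,\dots,\mathbb{M}_4$.
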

We give explicit formulas in the appendix.

\subsection{Linear discrete case}
There are many setting involving (\ref{1.1}) defined a linear discrete lattice $\{x\in\mathbb{Z}\}$ in different physically significant models; see for example
\cite{johansson01,johansson03,gorin08,gorin15}.

Ledoux \cite{ledoux05} generalised the considerations revised above for the continuous classical weights 
to the case of discrete classical weights on a linear lattice.
 Starting with a discrete Pearson equation \eqref{dpe}, analogous to (\ref{de})
it can be deduced that the discrete orthogonal polynomials $\{P_N(x)\}_{N=0}^\infty$ satisfy  the second order difference equation \cite{ismail05}
\begin{align*}
[\sigma(x)+\tau(x)]P_N(x+1)-[2\sigma(x)+\tau(x)-\lambda_N]P_N(x)+\sigma(x)P_{N}(x-1)=0.
\end{align*}
Also, with the assumption that the product $\sigma(x) w(x)$ vanishes at the end points of the support, we can check from \eqref{dpe} that the equations
\begin{align*}
\sum_{x\in\mathbb{Z}} \psi(x)\sigma(x) w(x)=\sum_{x\in\mathbb{Z}}\psi(x+1)\sigma(x+1) w(x+1)=\sum_{x\in\mathbb{Z}}\psi(x+1) w(x)(\sigma(x)+\tau(x))
\end{align*}
are valid for any summable function $\psi(x)$. Therefore, if we set $A(x)=\sigma(x)$ and $B(x)=\sigma(x)+\tau(x)$, the above equations can be equivalently written as \cite[Eqs.~(11),(12)]{ledoux05}
\begin{subequations}
\begin{align}
&BP_N(x+1)=(A+B+C_N)P_N(x)-AP_{N}(x-1),\\
&\int Af(x)d\mu=\int Bf(x+1)d\mu,\quad \forall f\in L^1(\mathbb{R})\label{linear}.
\end{align}
\end{subequations}

The equations (\ref{linear}) are fundamental in deriving a difference equation for the Laplace transform of the (un-normalised) measure $P_N^2d\mu$, where $d\mu$ the discrete measure is defined in linear lattice. Since the derivation of the differential equation satisfied by the Laplace transform in discrete case is similar to the continuous case, we omit the details here.

It should be remarked that a general boundary condition was considered in \cite{borodin17}. by setting that the weight function $w (x)$ is not defined on the whole real line, but rather supported by a union of several disjoint intervals with condition 
\begin{align*}
\frac{w(x)}{w(x-1)}=\frac{\phi^+(x)}{\phi^-(x)},
\end{align*}
where $\phi^{\pm}(x)$ vanishes at the ends of each support and are polynomials of degree at most $d$.
This condition was applied to compute the loop equation (or so-called Schwinger-Dyson equation) for several discrete models.

 A closely related loop equation formalism has been applied recently in the study of
 a $q$-boxed plane partition model relating to an example of (\ref{1.1}) on an exponential measure \cite{dimitrov19}.
 This is part of our motivation, in addition to its relevance to moments, 
  to consider whether $q$-orthogonal polynomial ensembles satisfy such an integration by part formula, and whether it can be applied to formulate a $q$-difference equation for the Laplace transform of the density.

\section{Laplace transform of discrete $q$-measure and its $q$-difference equation}\label{S4}
In this part, we consider discrete orthogonal polynomials (or so-called $q$-orthogonal polynomials) defined on the exponential lattice $\{q^n\,|\,n\in\mathbb{Z}\}$. We call a family of $q$-orthogonal polynomials $\{P_N(x;q)\}_{N=0}^\infty$ classical, with respect to the weight function $\rho(x;q)$, if they satisfy the orthogonal relation
\begin{align*}
\int_a^b P_N(x;q)P_M(x;q)\rho(x;q)d_qx=h_N\delta_{N,M},\quad h_N>0,
\end{align*}
where the Jackson's $q$-integral is given as \eqref{16d}.
Moreover, the weight function $\rho(x;q)$ satisfies a $q$-Pearson equation (\ref{16b}) 
with degrees of $\sigma(x)$ and $\tau(x)$ being, at most, of $2$ and $1$ respectively. From \cite{nikiforov86,nodarse06}, one knows that classical $q$-orthogonal polynomials satisfy the following second order difference equation
\begin{align}\label{ee2}
\sigma(x)D_qD_{q^{-1}}P_N(x)+\tau(x)D_qP_N(x)+C_NP_N(x)=0,
\end{align}
where $C_N$ is a constant independent of $x$.
Moreover, regarding with the Pearson pair $(\sigma, \tau)$, one can state the following proposition.
\begin{proposition}
The $q$-Pearson equation (\ref{16b})  induces  the $q$-integration by parts formula
\begin{align}\label{ibp}
\int_a^b \sigma(x) D_qf(x) g(x)d_q\mu=-\int_a^b \tau(x)f(qx)g(qx)d_q\mu-\int_a^b \sigma(x)f(qx)D_qg(x)d_q\mu,
\end{align}
where $d_q\mu=\rho(x;q)d_qx$ and $f(x),\,g(x)$ are two arbitrary $q$-summable functions.
\end{proposition}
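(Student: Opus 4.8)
The plan is to derive the $q$-integration by parts formula (\ref{ibp}) directly from the $q$-Pearson equation (\ref{16b}) together with the $q$-product (Leibniz) rule for the Jackson derivative $D_q$, mirroring the continuous argument leading to (\ref{cibp}) but being careful with the $q$-shifts. First I would record the two elementary identities that the operator $D_q$ satisfies. The product rule reads $D_q(fg)(x) = (D_qf(x)) g(x) + f(qx)(D_qg(x))$, which is the standard non-symmetric $q$-Leibniz rule; note the shift $f(qx)$ appearing in the second term, and this is precisely why the shifted arguments $f(qx)$, $g(qx)$ show up on the right-hand side of (\ref{ibp}). The other ingredient is a $q$-analogue of the fundamental theorem of calculus for the Jackson integral, namely that for a function $F$ with $\sigma(x)\rho(x;q)F(x)$ vanishing at the endpoints $a,b$ of the support one has $\int_a^b D_q\bigl(\sigma(x)\rho(x;q) F(x)\bigr)\,d_qx = 0$; this is the discrete boundary-term vanishing analogous to the decay hypothesis imposed in the continuous case.

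The central computation is to expand $D_q\bigl(\sigma(x)\rho(x;q)\,f(x)g(x)\bigr)$ using the product rule, grouping the factors so that the $q$-Pearson equation can be invoked. Writing $w(x)=\rho(x;q)$, I would first apply the Leibniz rule to peel off the factor $f(x)g(x)$, which produces a term involving $D_q\bigl(\sigma(x)w(x)\bigr)=\tau(x)w(x)$ by (\ref{16b}), plus a term involving $\sigma(qx)w(qx)\,D_q(fg)(x)$. The key point is that the $q$-Pearson equation converts the derivative of the product $\sigma w$ into the clean expression $\tau w$, with no surviving derivative of the weight; this is exactly the analogue of how (\ref{pearson}) is used in the continuous integration by parts. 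I would then expand the remaining $D_q(fg)(x)$ a second time by the product rule into $(D_qf)\,g + f(qx)(D_qg)$, and finally simplify the factor $\sigma(qx)w(qx)$ back in terms of $\sigma(x)w(x)$ where needed. Integrating the whole identity $D_q(\sigma w f g)$ over $[a,b]$, the left-hand side vanishes by the boundary condition, and collecting the three resulting terms yields (\ref{ibp}) after matching the shifted arguments.

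The main obstacle I anticipate is the bookkeeping of the $q$-shifts: each application of the $q$-Leibniz rule introduces an argument $qx$ on one factor, and one must verify that the pattern of shifts produced by the two successive applications of the product rule, combined with the shift implicit in $D_q(\sigma w) = \tau w$, reassembles exactly into the right-hand side of (\ref{ibp}) with the arguments $f(qx)$, $g(qx)$ in the $\tau$ term and $f(qx)$, $D_qg(x)$ in the $\sigma$ term. In particular one has to check that the factor $\sigma(qx)w(qx)$ can be rewritten consistently and that no spurious shift survives on the $D_qf\cdot g$ term. A secondary subtlety is confirming that the boundary term genuinely vanishes for the Jackson integral (\ref{16d}) with the classical $q$-weights, i.e. that $\sigma(x)w(x)$ decays at the lattice endpoints; for the little $q$-Laguerre and discrete $q$-Hermite weights this follows from the explicit product forms (\ref{40.1}) and (\ref{24}), but it is worth stating as the hypothesis that the $f,g$ are $q$-summable so that the telescoping Jackson sum of $D_q(\sigma w f g)$ indeed collapses to zero.
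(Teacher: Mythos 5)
Your proposal is correct and follows essentially the same route as the paper: the vanishing of the Jackson integral of a total $q$-derivative (no boundary contribution since $\sigma\rho$ vanishes at the endpoints), the non-symmetric $q$-Leibniz rule, and the $q$-Pearson equation to replace $D_q(\sigma\rho)$ by $\tau\rho$ --- the paper merely packages the first step as the general $q$-integration-by-parts formula cited from Ismail's book and then applies the product rule to $g\sigma\rho$. The shift bookkeeping you flag does require one further use of the Pearson relation, in the form $\sigma(qx)\rho(qx)=\sigma(x)\rho(x)-(1-q)x\tau(x)\rho(x)$ together with $(1-q)x\,D_qh=h(x)-h(qx)$, to turn your chosen grouping into the stated arguments $f(qx)$, $g(qx)$; this works out, and is avoided entirely if one instead uses the Leibniz rule in its other form $D_q(uv)=u(x)D_qv(x)+D_qu(x)\,v(qx)$ when peeling $fg$ off $\sigma\rho$.
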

\begin{proof}
Let's start with the general integration by parts formula given by \cite{ismail05}
\begin{align*}
\int_a^b \sigma(x)D_q f(x)g(x)d_q\mu= f(x)g(x)\sigma(x)\rho(x)|_a^b -\int_a^b f(qx)D_q(g(x)\sigma(x)\rho(x))d_qx.
\end{align*}
Since for classical $q$-weight functions, $\sigma(x)\rho(x)$ vanishes at the end points of support \cite[eq. (3.6)]{nodarse06}, we know the first term on the right hand side vanishes. 
In relation to the second term on the right hand side, we observe from the $q$-Pearson equation (\ref{16b}) that
\begin{align*}
D_q(g(x)\sigma(x)\rho(x))=D_q g(x)\sigma(x)\rho(x)+g(qx)\tau(x)\rho(x),
\end{align*}
thus completing the proof.
\end{proof}
\begin{remark}
With $q\mapsto q^{-1}$, the above integration by part formula can be written 
\begin{align}\label{ibp2}
\int_a^b \sigma(x)D_{q^{-1}}f(x)g(x)d_q\mu=-q\int_a^b \tau(x)f(x)g(x)d_q\mu-\int_a^b \sigma(x)f(q^{-1}x)D_{q^{-1}}g(x)d_q\mu.
\end{align}
\end{remark}
To obtain a $q$-analogue of the linear difference equation \eqref{linear}, let's introduce a polynomial function
\begin{equation}\label{ee3}
T(x)=\sigma(x)-(1-q)x\tau(x)
\end{equation}
and observe
\begin{align}\label{ibp3}
\int_a^b \sigma(x)f(x)d_q\mu=q\int_a^b T(x)f(qx)d_q\mu.
\end{align}

One can define a $q$-Laplace transform of the un-normalised measure $P_N^2(x)d_q\mu$
\begin{align}\label{ee0}
\phi(\lambda):=\int_a^b e_q(\lambda x)P_N^2(x)d_q\mu,
\end{align}
where $e_q(x)=\sum_{k=0}^\infty ((1-q)x)^k/(q;q)_k$; cf.~(\ref{S.5}). This $q$-exponential function satisfies 
\begin{align}\label{ee1}
D_{q,x}e_q(\lambda x)=\lambda e_q(\lambda x),\quad D_{q^{-1}}e_q(\lambda x)=\lambda e_q(q^{-1}\lambda x)=\lambda \Lambda^{-1}e_q(\lambda x),
\end{align}
where $\Lambda$ is a shift operator such that $\Lambda\phi(\lambda)=\phi(q\lambda)$. 
Therefore, for an arbitrary polynomial function $R(x)=a_nx^n+\cdots+a_0$, 
there exists a corresponding $q$-difference operator $\mathcal{R}(D_{q,\lambda})=a_nD_{q,\lambda}^n+\cdots+a_0$ such that
\begin{align}\label{4.7a}
\mathcal{R}(D_{q,\lambda})\phi(\lambda)=\int_a^b e_q(\lambda x)R(x)P_N^2(x)d_q\mu.
\end{align}
Moreover, for arbitrary $n,m\in\mathbb{N}$, we have
\begin{align*}
\int_a^b e_q(q^n\lambda x)R(q^mx)P_N^2(x)d_q\mu=\Lambda^{n-m}\mathcal{R}(q^mD_{q,\lambda}) \phi(\lambda),
\end{align*}
providing us with an analogue of (\ref{oef}).
 \begin{proposition}
 	For a polynomial $ R(x)=a_nx^n+\cdots+a_0 $,  the corresponding q-difference operator $ \mathcal{R}(D_{q,\lambda}) $ satisfies
 	\begin{align*}
 		\mathcal{R}(D_{q,\lambda})(\lambda^m\phi)=\sum^{m}_{i=0}\lambda^{m-i}\left[\begin{array}{c}
 		m\\
 		i
 		\end{array}\right]_q\mathcal{R}^{(i)}_{m}\phi,\quad m\in\mathbb{Z}_{\geq0},
 	\end{align*}
	where $\mathcal{R}_{m}^{(i)}$ is the operator corresponding to $ \Lambda^{m-i}D_{q}^iR(x)$.
	\end{proposition}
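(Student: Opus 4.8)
The plan is to reduce the statement to a $q$-Leibniz rule for $D_{q,\lambda}$ acting on the product $\lambda^m\phi$, and then to reconcile the resulting shift operators with the definition of $\mathcal{R}_m^{(i)}$. Since both sides are linear in the coefficients $a_0,\dots,a_n$ of $R$ and since $\mathcal{R}_m^{(i)}$ (the operator attached to $\Lambda^{m-i}D_q^iR(x)$) depends linearly on $R$, it suffices to treat the monomial case $R(x)=x^j$, i.e.\ $\mathcal{R}(D_{q,\lambda})=D_{q,\lambda}^{\,j}$; the general identity then follows by summing against $a_j$.

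First I would record the two elementary facts needed about $D_q$ and the shift $\Lambda$: the $q$-product rule $D_q(uv)=(D_qu)\,v+(\Lambda u)(D_qv)$, which follows directly from the definition of $D_q$ in (\ref{16b}), together with the commutation relation $D_q\Lambda=q\Lambda D_q$. Iterating the product rule and invoking the $q$-Pascal recurrence for the $q$-binomial coefficients (\ref{24b}) yields, by an easy induction on $j$, the $q$-Leibniz formula
$$D_q^{\,j}(uv)=\sum_{k=0}^{j}\left[\begin{array}{c}j\\k\end{array}\right]_q(\Lambda^{\,j-k}D_q^{k}u)(D_q^{\,j-k}v).$$

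Next I would specialize $u=\lambda^m$ and $v=\phi$. Using $D_q^{k}\lambda^m=([m]_q!/[m-k]_q!)\lambda^{m-k}$ and $\Lambda^{\,j-k}\lambda^{m-k}=q^{(j-k)(m-k)}\lambda^{m-k}$, the formula becomes a sum over $k$ of terms of the shape $\lambda^{m-k}(\text{scalar})D_q^{\,j-k}\phi$. Writing $[m]_q!/[m-k]_q!=[k]_q!\left[\begin{array}{c}m\\k\end{array}\right]_q$ and $\left[\begin{array}{c}j\\k\end{array}\right]_q[k]_q!=[j]_q!/[j-k]_q!$, and relabelling $i=k$, the prefactor reorganizes into $\lambda^{m-i}\left[\begin{array}{c}m\\i\end{array}\right]_q$ times the scalar $([j]_q!/[j-i]_q!)\,q^{(m-i)(j-i)}$ times $D_q^{\,j-i}\phi$. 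Summing against $a_j$, the operator acting on $\phi$ in the $i$-th slot is $\sum_j a_j\,([j]_q!/[j-i]_q!)\,q^{(m-i)(j-i)}D_q^{\,j-i}$, which is exactly the operator attached to the polynomial $(D_q^iR)(q^{m-i}x)$. Identifying the dilation $x\mapsto q^{m-i}x$ on the $x$-side with the action of $\Lambda^{m-i}$ (legitimate because $e_q(\lambda x)$ is symmetric in $\lambda$ and $x$, so that a $\lambda$-shift is an $x$-dilation under the integral, cf.\ (\ref{ee1})), this operator is precisely $\mathcal{R}_m^{(i)}$, giving the claimed identity.

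The main obstacle is not any single hard estimate but the careful bookkeeping of the shifts and $q$-powers: one must place $\Lambda$ on the correct factor in the $q$-Leibniz rule and then recognize that the accumulated factor $q^{(m-i)(j-i)}$ is exactly what turns $D_q^iR(x)$ into $\Lambda^{m-i}D_q^iR(x)$ in the sense defining $\mathcal{R}_m^{(i)}$. An equivalent route, closer in spirit to the continuous identity (\ref{oef}), is induction on $m$: the single-step relation $\mathcal{R}(D_{q,\lambda})(\lambda\psi)=\lambda\,\mathcal{R}(qD_{q,\lambda})\psi+\mathcal{S}\psi$, where $\mathcal{S}$ is the operator corresponding to $D_qR(x)$, follows at once from the product rule; feeding the inductive hypothesis through it and applying the $q$-Pascal identity reproduces the general formula, the delicate point again being to match the $q$-powers generated at each step against those demanded by $\left[\begin{array}{c}m+1\\i\end{array}\right]_q$.
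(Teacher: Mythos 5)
Your argument is correct, and your primary route is genuinely different from the paper's. You reduce to monomials $R(x)=x^j$, establish the full two-factor $q$-Leibniz formula $D_q^{j}(uv)=\sum_{k}\big[{j\atop k}\big]_q(\Lambda^{j-k}D_q^{k}u)(D_q^{j-k}v)$ by induction on $j$, and then evaluate the specialisation $u=\lambda^m$, $v=\phi$ in closed form, checking that the accumulated scalar $q^{(m-i)(j-i)}[j]_q!/[j-i]_q!$ is exactly the coefficient defining the operator attached to $\Lambda^{m-i}D_q^{i}R$; this computation is right, and the vanishing of terms with $i>j$ or $i>m$ keeps the range of summation consistent with the statement. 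The paper instead proves only the single-step identity $D_q^{n}(\lambda f)=[n]_qD_q^{n-1}f+q^{n}\lambda D_q^{n}f$ by induction on $n$, then takes $f=\lambda^{m-1}\phi$ and inducts on $m$ using the $q$-Pascal recurrence --- which is precisely the ``alternative route'' you sketch in your final paragraph. Your main route buys a fully explicit verification of the $q$-powers without threading them through an induction on $m$, at the cost of proving the stronger Leibniz formula first; the paper's version is shorter on the page but leaves the $q$-binomial bookkeeping implicit. One cosmetic remark: identifying the dilation $x\mapsto q^{m-i}x$ with $\Lambda^{m-i}$ in the definition of $\mathcal{R}_m^{(i)}$ needs no appeal to the symmetry of $e_q(\lambda x)$ --- it is immediate from the convention (\ref{4.7a}) that the operator attached to a polynomial $S(x)$ is obtained by substituting $D_{q,\lambda}$ for $x$, so the operator attached to $S(q^{m-i}x)$ is $S(q^{m-i}D_{q,\lambda})$.
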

 	\begin{proof}
 		Let's first prove that
		\begin{align}\label{leibniz}
		D_q^n (\lambda f)=[n]_qD_q^{n-1}f+q^n \lambda D_q^n f,\quad [n]_q=\frac{1-q^n}{1-q}
		\end{align}
		holds for arbitrary function $f$. Making use of induction, we assume that \eqref{leibniz} is true and then 
		\begin{align*}
		D_q^{n+1}(\lambda f)&=D_q([n]_qD_q^{n-1}f+q^{n}\lambda D_q^{n}f)\\&=[n]_qD_q^{n}f+q^{n}D_q^{n}f+q^{n+1}\lambda D_q^{n+1} f\\&=[n+1]_qD_q^n f+q^{n+1}\lambda D_q^{n+1}f.
		\end{align*}
Taking $f=\lambda^{m-1}\phi$, $m\geq 1$, and using $q$-binomial formula the result follows.
 	\end{proof}
	For example, if we consider $R(x)$ as a second order polynomial, we have
\begin{align*}
\mathcal{R}(\lambda \phi)&=\lambda \mathcal{R}^{(0)}_1\phi+\mathcal{R}_1^{(1)}\phi,\\
\mathcal{R}(\lambda^2\phi)&=\lambda^2\mathcal{R}_2^{(0)}\phi+\lambda (1+q)\mathcal{R}_2^{(1)}\phi+\mathcal{R}_2^{(2)}\phi,\\
\mathcal{R}(\lambda^3\phi)&=\lambda^3\mathcal{R}_3^{(0)}\phi+\lambda^2 (1+q+q^2)\mathcal{R}_3^{(1)}\phi+\lambda (1+q+q^2)\mathcal{R}_3^{(2)}\phi
\end{align*}
since $\mathcal{R}_3^{(3)}=0$. Specifically, if we denote $R(x)=r_2x^2+r_1x+r_0$, then we have
\begin{align*}
&\mathcal{R}_1^{(0)}=r_2q^2D_q^2+r_1qD_q+r_0,\quad \mathcal{R}_1^{(1)}=r_2(1+q)D_q+r_1,\\
&\mathcal{R}_2^{(0)}=r_2q^4D_q^2+r_1q^2D_q+r_0,\quad \mathcal{R}_2^{(1)}=r_2q(1+q)D_q+r_1,\quad \mathcal{R}_2^{(2)}=r_2(1+q),\\
&\mathcal{R}_3^{(0)}=r_2q^6D_q^3+r_1q^3D_q+r_0,\quad \mathcal{R}_3^{(1)}=r_2q^2(1+q)D_q+r_1,\quad \mathcal{R}_3^{(2)}=r_2(1+q).
\end{align*}

For later use, we denote $\mathcal{A}$, $\mathcal{B}$ and $\mathcal{T}$ as operators corresponding to $\tau(x)$, $\sigma(x)$ and $T(x)$
in the sense of (\ref{4.7a}).
Now taking $f=e_q(\lambda x)$ and $g=P_N^2(x)$, and using formulas \eqref{ibp2} and \eqref{ibp3}, one can get
\begin{align}\label{qeq1}
(\lambda \Lambda^{-1}\mb +q  \ma)\phi=-\Lambda^{-1}\mb \int_0^\infty e_q(\lambda x)P_ND_{q^{-1}}P_Nd_q\mu-q\mt \int_0^\infty e_q(\lambda x)P_N D_q P_Nd_q\mu.
\end{align}
Since $q\mt\int_0^\infty e_q(\lambda x)P_ND_qP_Nd_q\mu=\mb \int_0^\infty e_q(q^{-1}\lambda x)P_N(q^{-1}x)D_{q^{-1}}P_Nd_q\mu$, and by using formula \eqref{ibp} with $f=e_q(\lambda x)$ and $g=P_N(q^{-1}x)D_{q^{-1}}P_N$, one has
\begin{align}\label{qeq2}
\lambda\mt\int_0^\infty e_q(\lambda x)P_ND_qP_Nd_q\mu=C_N\phi-q^{-1}\mb \int_0^\infty e_q(\lambda x)(D_{q^{-1}}P_N)^2d_q\mu.
\end{align}
Similarly, by taking $ f=e_q(\lambda x) $ and $ g=P_ND_{q^{-1}}P_N $, one has
\begin{align}\label{qeq3}
	\lambda\mb\int_{0}^{\infty}e_q(\lambda x)P_ND_{q^{-1}}P_Nd_q\mu=C_N\Lambda\phi-q^{-1}\mb\int_0^\infty e_q(\lambda x)(D_{q^{-1}}P_N)^2d_q\mu.
\end{align}
Use  of formula \eqref{ibp} is made to evaluate the last term in the above equation. Thus by taking $f=e_q(\lambda x)$ and $g=(D_{q^{-1}}P_N)^2$ in the integration by parts formula, one has
\begin{align}\label{qeq4}
\lambda \mb \int_0^\infty &e_q(\lambda x)(D_{q^{-1}}P_N)^2d_q\mu=\Lambda \ma \int_0^\infty e_q(\lambda x)D_qP_ND_{q^{-1}}P_Nd_q\mu\nonumber\\
&+C_N\Lambda \int_0^\infty e_q(\lambda x)P_ND_{q^{-1}}P_Nd_q\mu
+C_N\Lambda \int_0^\infty e_q(\lambda x)P_ND_qP_Nd_q\mu.
\end{align}
Here we remark that when we take the continuum limit $q\to 1$, then equations \eqref{qeq1}-\eqref{qeq4} coincide with equations \eqref{eq1}-\eqref{eq3}. However, these equations have not yet been closed --- more relations are needed.
In this regard, one can notice that
\begin{align*}
\sigma D_{q^{-1}}P_N(D_qP_N-D_{q^{-1}}P_N)=\sigma (q-1)xD_{q^{-1}}P_ND_qD_{q^{-1}}P_N=(1-q)xD_{q_{-1}}P_N(\tau D_qP_N+C_NP_N),
\end{align*}
which results in 
\begin{align}\label{qeq5}
\mt \int_0^\infty e_q(\lambda x)D_{q^{-1}}P_ND_qP_Nd_q\mu=\mb \int_0^\infty e_q(\lambda x)(D_{q^{-1}}P_N)^2d_q\mu+C_N\mf \int_0^\infty e_q(\lambda x)P_ND_{q^{-1}}P_Nd_q\mu,
\end{align}
where $\mf$ is the $q$-difference operator corresponding to $(1-q)x$ satisfying $\mt=\mb-\ma\mf$. 
Therefore, we end up with five equations (\eqref{qeq1}-\eqref{qeq5}) for $ \phi $, $\psi_1,\,\psi_2,\,\psi_3$ and $\psi_4$, where
\begin{align*}
 &\psi_1:=\int_0^\infty e_q(\lambda x)P_ND_{q^{-1}}P_Nd_q\mu,\quad \psi_2:=\int_0^\infty e_q(\lambda x)P_ND_qP_Nd_q\mu,\\
 & \psi_3:=\int_0^\infty e_q(\lambda x)(D_{q^{-1}}P_N)^2d_q\mu,\quad \psi_4:=\int_0^\infty e_q(\lambda x)D_{q^{-1}}P_ND_qP_Nd_q\mu.
\end{align*}

 We will now show how one can eliminate the latter four terms to get an equation for $ \phi $. 
By acting operator $ \mt\Lambda^{-1} $ on both sides of \eqref{qeq4} and making use of \eqref{qeq5}, one has
\begin{align*}
(q^{-1}\mt\lambda\Lambda^{-1}-\ma)\mb\psi_3=C_N\mb\psi_1+C_N\mt\psi_2
\end{align*}
By substituting equation \eqref{qeq2} and \eqref{qeq3} into the above equation, one has
\begin{align}\label{qeq6}
	(\lambda q^{-1}\mt\lambda\Lambda^{-1}-\lambda\ma+2q^{-1}C_N)\mb\psi_3=C_N^2(1+\Lambda)\phi.
\end{align}
Meanwhile, by substituting equation \eqref{qeq2} and \eqref{qeq3} into equation \eqref{qeq1}, one finds that
\begin{align*}
	q(q\lambda^2\mb+q\lambda\Lambda\ma+2C_N\Lambda)\phi=(1+\Lambda)\mb\psi_3.
\end{align*}
Thus, together with equation \eqref{qeq6}, we have
\begin{align}\label{qle}
	(\lambda\mt\lambda\Lambda^{-1}-q\lambda\ma+2C_N)(1+\Lambda)^{-1}(q\lambda^2\mb+q\lambda\Lambda\ma+2C_N\Lambda)\phi=C_N^2(1+\Lambda)\phi.
\end{align}
By noting that for $i\in\mathbb{Z}$, there holds a commutative formula
\begin{align*}
(1+\Lambda)^{-1}\lambda^i=\lambda^i(1+q^i\Lambda)^{-1},
\end{align*}
and degree of $\mathcal{T}$ and $\ma$ are no greater than 2 and 1 respectively,
we can state the following result.

\begin{theorem}\label{prop4}
	The $q$-Laplace transform of the un-normalised measure $P_N^2d_q\mu$ on the exponential lattice satisfies 
	\begin{align}\label{qlaplace}
	(\lambda^4\mathcal{M}_4+\lambda^3\mathcal{M}_3+\lambda^2\mathcal{M}_2+\lambda\mathcal{M}_1+\mathcal{M}_0)\phi=0,
	\end{align}	where 
	\begin{align*}
	&\mathcal{M}_4=q^{-1}\mt_3^{(0)}\Lambda^{-1}(1+q^2\Lambda)^{-1}\mb,\\
	&\mathcal{M}_3=(q^{-1}(1+q+q^2)\mt_3^{(1)}\Lambda^{-1}-q^2\ma_2^{(0)})(1+q^2\Lambda)^{-1}\mb+\mt_2^{(0)}(1+q\Lambda)^{-1}\ma,\\
	&\mathcal{M}_2=(q^{-1}(1+q+q^2)\mt_3^{(2)}\Lambda^{-1}-q^2(1+q)\ma_2^{(1)}+2qC_N)(1+q^2\Lambda)^{-1}\mb\\&\qquad\qquad+((1+q)\mt_2^{(1)}-q^2\ma_1^{(0)}\Lambda)(1+q\Lambda)^{-1}\ma+2C_N\mt_1^{(0)}(1+\Lambda)^{-1},\\
	&\mathcal{M}_1=(\mt_2^{(2)}-q^2\ma_1^{(1)}\Lambda+2qC_N\Lambda)(1+q\Lambda)^{-1}\ma+2C_N(\mt_1^{(1)}-q\ma\Lambda)(1+\Lambda)^{-1},\\
	&\mathcal{M}_0=C_N^2(4(1+\Lambda)^{-1}\Lambda-(1+\Lambda)).
	\end{align*}
	Here $\Lambda$ is the shift operator satisfying $\Lambda \phi(x)=\phi(\lambda x)$, $C_N$ is a particular coefficient in the $q$-difference equation (\ref{ee2}) and 
	$\mathcal{A}$, $\mathcal{B}$ and $\mathcal{T}$ are $q$-difference operators related to $\tau(x)$, $\sigma(x)$ and $T(x)$ in the sense of (\ref{4.7a}), 
	with $\sigma(x), \tau(x)$ specified as the Pearson pair and $T(x)=\sigma(x)-(1-q)\tau(x)$. 
	\end{theorem}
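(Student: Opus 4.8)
The plan is to treat (\ref{qle}) as an operator identity on $\phi$, transpose its right-hand side, and expand the composite operator $\big[(\lambda\mt\lambda\Lambda^{-1}-q\lambda\ma+2C_N)(1+\Lambda)^{-1}(q\lambda^2\mb+q\lambda\Lambda\ma+2C_N\Lambda)-C_N^2(1+\Lambda)\big]$ into a sum $\lambda^4\mathcal{M}_4+\lambda^3\mathcal{M}_3+\lambda^2\mathcal{M}_2+\lambda\mathcal{M}_1+\mathcal{M}_0$ in which every explicit factor of $\lambda$ has been collected at the far left and each coefficient $\mathcal{M}_j$ is a $\lambda$-free $q$-difference operator built from $\Lambda$, the resolvents $(1+q^i\Lambda)^{-1}$, and $D_{q,\lambda}$. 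No fresh identity is needed: the entire content is a controlled normal-ordering of the product already assembled in (\ref{qle}).

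The normal ordering rests on two non-commutation rules. Because $\Lambda\phi(\lambda)=\phi(q\lambda)$ one has $\Lambda\lambda=q\lambda\Lambda$, whence $\Lambda^{-1}\lambda^{i}=q^{-i}\lambda^{i}\Lambda^{-1}$ together with the resolvent crossing $(1+\Lambda)^{-1}\lambda^{i}=\lambda^{i}(1+q^{i}\Lambda)^{-1}$ recorded in the text; these push $\Lambda^{-1}$ and $(1+\Lambda)^{-1}$ through the $\lambda$'s at the price of replacing $\Lambda$ by $q^{i}\Lambda$ inside a resolvent crossed by $\lambda^{i}$. To carry a factor $\lambda^{m}$ leftwards past one of the polynomial operators $\mt,\ma,\mb$ I would invoke the $q$-Leibniz identity (\ref{leibniz}) in the packaged form established above, namely $\mathcal{R}\circ\lambda^{m}=\sum_{i=0}^{m}\lambda^{m-i}\big[{m\atop i}\big]_q\,\mathcal{R}^{(i)}_{m}$, where $\mathcal{R}^{(i)}_{m}$ is the operator attached to $\Lambda^{m-i}D_q^{i}R(x)$. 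This is exactly the mechanism that converts the bare operators into the indexed symbols $\mt^{(0)}_{3},\mt^{(1)}_{3},\mt^{(2)}_{3},\ma^{(0)}_{2},\ma^{(1)}_{1}$ and their analogues in the statement, the subscript recording the total power of $\lambda$ dragged across and the superscript the number of $q$-derivatives spent.

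With these rules the computation becomes disciplined bookkeeping. I would first multiply out the two factors, so that each summand is a product of a left $\lambda$-monomial, one of $\mt,\ma$ or the scalar $2C_N$, a factor $\Lambda^{\pm1}(1+\Lambda)^{-1}$, a right $\lambda$-monomial, and one of $\mb,\ma$ or $2C_N\Lambda$. Then, summand by summand, I would first cross the right $\lambda$-monomial through the resolvent and the $\Lambda^{-1}$, next cross it through the middle operator via the Leibniz packaging, and finally merge it with the left $\lambda$-monomial. For example the top coefficient arises solely from pairing $\lambda\mt\lambda\Lambda^{-1}$ with $q\lambda^{2}\mb$: crossing $\lambda^{2}$ past $(1+\Lambda)^{-1}$ and $\Lambda^{-1}$ yields the factor $q^{-1}$ and the resolvent $(1+q^{2}\Lambda)^{-1}$, and the $i=0$ Leibniz term then pushes the combined $\lambda^{3}$ through $\mt$, reproducing $\mathcal{M}_4=q^{-1}\mt^{(0)}_{3}\Lambda^{-1}(1+q^{2}\Lambda)^{-1}\mb$. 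The coefficients $\mathcal{M}_3,\mathcal{M}_2,\mathcal{M}_1$ collect the several ways of lowering the $\lambda$-power by spending one or two $q$-derivatives, while the only genuinely $\lambda$-free contributions are $2C_N(1+\Lambda)^{-1}\,2C_N\Lambda$ from the product and the transposed $-C_N^{2}(1+\Lambda)$, giving $\mathcal{M}_0=C_N^{2}\big(4(1+\Lambda)^{-1}\Lambda-(1+\Lambda)\big)$.

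The main difficulty is bookkeeping rather than conceptual: since $\lambda$ commutes neither with $\Lambda$ (generating powers of $q$) nor with the $D_{q,\lambda}$'s inside $\mt,\ma,\mb$ (generating $[n]_q$ and $q$-binomial factors), each summand fragments into several pieces, and one must apply the two mechanisms in a fixed order, resolvent first and differential operator second, while tracking the accumulated powers of $q$ so that every coefficient lands exactly as displayed. A reliable sanity check throughout is the limit $q\to1$, where the crossings degenerate to ordinary commutation, $\Lambda\to1$, and $\mathcal{M}_0\to0$; the operators $\mathcal{M}_1,\dots,\mathcal{M}_4$ must then collapse onto Ledoux's $\mathbb{M}_1,\dots,\mathbb{M}_4$ from Proposition \ref{P3.1}, which pins down the overall normalisation and exposes any stray factor of $q$.
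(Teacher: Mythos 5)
Your proposal is correct and follows essentially the same route as the paper: the theorem is obtained by normal-ordering the operator identity (\ref{qle}) using the resolvent crossing $(1+\Lambda)^{-1}\lambda^i=\lambda^i(1+q^i\Lambda)^{-1}$ together with the packaged $q$-Leibniz rule $\mathcal{R}(\lambda^m\phi)=\sum_i\lambda^{m-i}\big[{m\atop i}\big]_q\mathcal{R}^{(i)}_m\phi$, which is precisely what the text does in passing from (\ref{qle}) to the displayed $\mathcal{M}_j$. Your spot checks of $\mathcal{M}_4$ and $\mathcal{M}_0$ and the $q\to1$ consistency check are both accurate.
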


\section{Application in the case of the discrete $q$-Hermite weight}\label{S5}

Let $\{ p^{(d\text{-}qH)}_n(x;q) \}_{n\in \mathbb{N}}$ denote the monic $q$-orthogonal polynomials with respect to the
discrete $q$-Hermite weight (\ref{24}). 
It has been shown in \cite{Wi12} that up to proportionality, the moments of the corresponding un-normalised
measure $( p^{(d\text{-}qH)}_N(x;q))^2 d_q \mu$ have combinatorial meaning in terms of rook placements. This motivates giving
particular attention to this case.


Different from \cite[Sec. 3.28]{koekoek96}, we consider orthonormal polynomials $\{H_\ell(x;q)\}_{\ell\in\mathbb{N}}$ satisfying
\begin{align*}
\int_{-1}^1 H_m(x;q)H_n(x;q)d_q\mu=\delta_{n,m},
\end{align*}
where $ d_q\mu $ is specified by the corresponding weight function of discrete $q$-Hermite  polynomials given by \eqref{24}.
We first make explicit the $q$-difference equation \eqref{qlaplace} for the $q$-Laplace transform
of the un-normalized measure
$(H_N(x;q))^2d_q\mu$. For simplicity, we consider the $N=0$ case, i.e. 
\begin{equation}\label{5.1s}
	\phi_0(\lambda) := \int_{-1}^1 e_q(\lambda x) w^{(d\mbox{-}qH)}(x) \, d_qx.
\end{equation}
Noting in the definition of $e_q(x)$ as a sum below (\ref{ee0}) that all terms are positive, as is $w^{(d\mbox{-}qH)}(x)$,
 the order of the sum and the integral can be interchanged.  Computing the resulting integral using (\ref{21d}) shows
\begin{equation}\label{5.1t}
	\phi_0(\lambda) = \sum_{k=0}^{\infty}\frac{(1-q)^{2k+1}\lambda^{2k}}{(q^2;q^2)_k}=(1-q)e_{q^2}\left(
	\frac{\lambda^2}{(1+q)^2}
	\right).
\end{equation}
The simplicity of (\ref{5.1t}) also allows for an independent verification of the $q$-difference equation.

\begin{proposition}
	The $q$-difference Laplace transform $\phi_0(\lambda)$ satisfies the q-difference equation
	\begin{align}\label{5.3a}
		&\left(\lambda^4 q^{-1}\Lambda^{-1}(1+q^2\Lambda)^{-1}(D_q^2-1)\right. \nonumber \\
		&+\lambda^3(q^4(1-q)^{-1}D_q(1+q^2\Lambda)^{-1}(D_q^2-1)-(1-q)^{-1}(1+q\Lambda)^{-1}D_q)   \nonumber \\
		&+\lambda^2(q^2(1+q)(1-q)^{-1}(1+q^2\Lambda)^{-1}(D_q^2-1)+q^3(1-q)^{-2}D_q\Lambda(1+q\Lambda)^{-1}D_q )   \nonumber \\
		&+\lambda q^2(1-q)^{-2}\Lambda(1+q\Lambda)^{-1}D_q)\phi=0,
	\end{align}	
\end{proposition}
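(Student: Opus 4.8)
The plan is to prove the proposition in two complementary ways: first by specialising the general fourth order $q$-difference equation of Theorem \ref{prop4} to the discrete $q$-Hermite weight with $N=0$, and then by an independent check against the explicit closed form (\ref{5.1t}). The first step is to pin down the $q$-Pearson pair for the weight (\ref{24}). Using $(qx;q)_\infty(-qx;q)_\infty=(q^2x^2;q^2)_\infty$ one has $w^{(d\text{-}qH)}(x)\propto(q^2x^2;q^2)_\infty$ and hence $w^{(d\text{-}qH)}(qx)=w^{(d\text{-}qH)}(x)/(1-q^2x^2)$. Inserting the ansatz $\sigma(x)=x^2-1$ into $D_q(\sigma w)=\tau w$ collapses the telescoping factor and forces $\tau(x)=x/(1-q)$; moreover $\sigma(x)w(x)$ then vanishes at $x=\pm1$, as the integration by parts (\ref{ibp}) requires. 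The decisive consequence, via (\ref{ee3}), is that $T(x)=\sigma(x)-(1-q)x\tau(x)=-1$ is constant.

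The second ingredient is the value of $C_N$ at $N=0$. Since the degree-zero orthonormal polynomial $H_0$ is constant, every term of (\ref{ee2}) except $C_0H_0$ vanishes, so $C_0=0$. These two facts strip the general operators $\{\mathcal{M}_i\}$ down to exactly the four terms displayed in (\ref{5.3a}). Because $T$ is constant, $D_q^iT=0$ for $i\ge1$, whence all $\mathcal{T}_m^{(i)}$ with $i\ge1$ vanish while $\mathcal{T}_m^{(0)}=-1$; because $C_0=0$, the operator $\mathcal{M}_0$ together with every $C_N$-proportional contribution to $\mathcal{M}_1,\mathcal{M}_2$ disappears. The surviving pieces are then evaluated using $\mathcal{B}=D_q^2-1$ (the operator attached to $\sigma=x^2-1$), $\mathcal{A}=(1-q)^{-1}D_q$, and the scaled operators $\mathcal{A}_m^{(0)}=q^m(1-q)^{-1}D_q$ and $\mathcal{A}_m^{(1)}=(1-q)^{-1}$ read off from the Leibniz rule (\ref{leibniz}). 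Assembling $\lambda^4\mathcal{M}_4+\lambda^3\mathcal{M}_3+\lambda^2\mathcal{M}_2+\lambda\mathcal{M}_1$ and collecting the $(1+q^2\Lambda)^{-1}\mathcal{B}$ and $(1+q\Lambda)^{-1}\mathcal{A}$ blocks reproduces (\ref{5.3a}).

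For the independent verification I would feed the series $\phi_0(\lambda)=\sum_{k\ge0}(1-q)^{2k+1}\lambda^{2k}/(q^2;q^2)_k$ from (\ref{5.1t}) directly into (\ref{5.3a}). On a monomial $\lambda^{2k}$ the constituent operators act almost diagonally: $D_q\lambda^{2k}=[2k]_q\lambda^{2k-1}$, $\Lambda^{\pm1}\lambda^{2k}=q^{\pm2k}\lambda^{2k}$, and $(1+q^j\Lambda)^{-1}\lambda^{2k}=\lambda^{2k}/(1+q^{2k+j})$, so the left-hand side becomes a single power series in $\lambda$ whose coefficient of each power can be checked to vanish, confirming the equation without appeal to Theorem \ref{prop4}.

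The main obstacle I anticipate is purely the noncommutative bookkeeping in the specialisation step. The factors $\Lambda^{-1}$, $(1+q\Lambda)^{-1}$, $(1+q^2\Lambda)^{-1}$ and the various $q$-scaled copies of $D_q$ do not commute, and the $\mathcal{M}_i$ combine them in nested products, so keeping the $q$-powers and the signs straight, and handling the formal inverse operators through their action on monomials, is where the care lies rather than in any conceptual difficulty; the explicit check of the previous paragraph serves precisely to guard against a slip in that reduction.
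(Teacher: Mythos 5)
Your proposal is correct and follows essentially the same route as the paper: the paper likewise identifies the Pearson pair $(\sigma,\tau)=(x^2-1,(1-q)^{-1}x)$ with $T(x)=-1$ and $C_0=0$, substitutes into Theorem \ref{prop4}, and then independently verifies the equation by inserting the series form of $\phi_0(\lambda)$ and showing the coefficient of each $\lambda^{2k}$ vanishes. Your added remarks on why $T$ is constant and on the noncommutative bookkeeping are accurate but do not change the method.
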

\begin{proof}
	The Pearson pair with respect to the discrete $ q $-Hermite polynomials is given by
\begin{align}\label{pp-dqH}
(\sigma,\tau)=(x^2-1,(1-q)^{-1}x),
\end{align}
and the corresponding operators in the sense of (\ref{4.7a}) are
\begin{align*}
A(x)=(1-q)^{-1}x,\quad B(x)=x^2-1,\quad T(x)=-1,
\end{align*}
while the constant is given by $C_0=0$.
By substituting these into Proposition \ref{prop4}, we obtain 
the explicit formula for those difference operators and thus the q-difference equation. 

On the other hand, the validity of (\ref{5.3a}) can be checked directly.
By denoting
\begin{align*}
f_k=\frac{(1-q)^{2k+1}}{(q^2;q^2)_k}, \quad [k]_q=\frac{1-q^k}{1-q},
\end{align*}
and taking $\phi_0(\lambda)$ in terms of the series form from (\ref{5.1t})
 and substituting it into the q-difference equation, one can compute the coefficients of $ \lambda^{2k} $ as
\begin{align*}
	[\lambda^{2k}]&=\frac{1}{1+q^{2k-2}}\left(-q^{-2k+3}[2k-2]_q[2k-3]_qf_{k-1}+q^{-2k+3}f_{k-2}-\frac{[2k-2]_q}{1-q}f_{k-1}\right)\\
	&+\frac{1}{1+q^{2k}}\left(-\frac{q^4[2k]_q[2k-1]_q[2k-2]_q}{1-q}f_k+\frac{q^4[2k-2]_q}{1-q}f_{k-1}\right.\\
	&\left.-\frac{q^2(1+q)}{1-q}\left([2k]_q[2k-1]_qf_k-f_{k-1}\right)-\frac{q^{2k+2}[2k]_q[2k-1]_q}{(1-q)^2}f_k-\frac{q^{2k+1}[2k]_q}{(1-q)^2}f_k\right)\\
	&=\frac{q^{2k-1}(1-q^{2k})f_k}{(1-q)^4(1+q^{2k})}[q^4(1-q^{2k-2})+q^2(1-q^2)-q^3(1-q^{2k-1})-q^2(1-q)]=0,
\end{align*}
as required.
\end{proof}
We now turn to the explicit formula for the moments of $q$-Hermite polynomials by making use of the Pearson relation.
It is known that the normalized discrete q-Hermite polynomials have the  generating function \cite{koekoek10}
\begin{align*}
	\frac{\left(t^{2} ; q^{2}\right)_{\infty}}{(x t ; q)_{\infty}}=\sum_{n=0}^{\infty} \frac{q^{\frac{1}{4}n(n-1)}\sqrt{1-q}H_{n}(x ; q)}{\sqrt{(q ; q)_{n}}} t^{n}.
\end{align*}
The generating function implies the $q$-difference equation 
\begin{align}\label{5.4a}
	H_n(qx)=H_n(x)-(1-q^n)d_n^{-1}xH_{n-1}(x),\quad d_n=q^{(n-1)/2}(1-q^n)^{1/2},
\end{align}
and the three term recurrence relation
\begin{align*}
xH_n(x)=d_{n+1}H_{n+1}(x)+d_nH_{n-1}(x).
\end{align*}
Combining these two equations leads to 
\begin{align}\label{5.4b}
H_n(qx)=q^n H_n(x)-q^{-n+1}d_nd_{n-1}H_{n-2}(x),
\end{align}
which is useful for deriving a particular linearisation formula.
\begin{proposition}
The shifted  normalised discrete $q$-Hermite  polynomial $H_n(q^kx)$ is expressed in terms
of the un-shifted normalised  polynomials according to
\begin{align*}
	H_n(q^kx)=\sum_{l=0}^{\min(k,[\frac{n}{2}])}\left((-1)^{l}q^{(k-l)(n-2l)+l(l-n)}\begin{aligned}
		\left[\begin{array}{c}
			k\\
			k-l
		\end{array}\right]
	\end{aligned}_{q^2}\prod_{i=0}^{2l-1}d_{n-i}\right)H_{n-2l}(x).
\end{align*}
\end{proposition}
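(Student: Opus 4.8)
The plan is to prove the identity by induction on the shift index $k$, using as the engine the two-step recurrence (\ref{5.4b}), which is exactly the $k=1$ instance of the claim. It is convenient to abbreviate the asserted coefficient as
\begin{equation*}
C(n,k,l) = (-1)^l q^{(k-l)(n-2l)+l(l-n)} \left[\begin{array}{c} k \\ k-l \end{array}\right]_{q^2} \prod_{i=0}^{2l-1} d_{n-i},
\end{equation*}
so that the goal is $H_n(q^k x)=\sum_{l=0}^{\min(k,\lfloor n/2\rfloor)} C(n,k,l)\,H_{n-2l}(x)$. The base case $k=0$ is the trivial identity, and one checks at once that $C(n,1,0)=q^n$ and $C(n,1,1)=-q^{-n+1}d_nd_{n-1}$, so that $k=1$ is precisely (\ref{5.4b}). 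Throughout I use the standing convention that a $q$-binomial with out-of-range lower index is zero.

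For the inductive step I would apply (\ref{5.4b}) at the shifted argument $y=q^k x$, obtaining
\begin{equation*}
H_n(q^{k+1}x) = q^n H_n(q^k x) - q^{-n+1} d_n d_{n-1} H_{n-2}(q^k x),
\end{equation*}
and then substitute the induction hypothesis for both $H_n(q^k x)$ and $H_{n-2}(q^k x)$. The coefficient of $H_{n-2l}(x)$ picks up a contribution $q^n C(n,k,l)$ from the first term and $-q^{-n+1}d_nd_{n-1}C(n-2,k,l-1)$ from the second (since $H_{n-2-2m}=H_{n-2l}$ for $m=l-1$). Thus the whole claim reduces to the single scalar recurrence
\begin{equation*}
C(n,k+1,l) = q^n C(n,k,l) - q^{-n+1} d_n d_{n-1} C(n-2,k,l-1).
\end{equation*}
The matching of summation ranges is routine once one notes $\lfloor (n-2)/2\rfloor+1=\lfloor n/2\rfloor$, so that the new top term $l=\min(k+1,\lfloor n/2\rfloor)$ and the term $l=0$ are each produced by just one of the two sums, consistent with the vanishing convention above.

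The crux is verifying this scalar recurrence. First I would observe that the $d$-products telescope exactly: since $d_n d_{n-1}\prod_{i=0}^{2(l-1)-1} d_{(n-2)-i}=\prod_{i=0}^{2l-1} d_{n-i}$, the factor $\prod_{i=0}^{2l-1} d_{n-i}$ is common to all three coefficients and cancels. Writing $E(n,k,l):=(k-l)(n-2l)+l(l-n)$, a short computation gives $E(n,k+1,l)=E(n,k,l)+(n-2l)$ and $E(n-2,k,l-1)=E(n,k,l)+(n-2l)+(n-1)$; combined with the prefactors $q^n$ and $q^{-n+1}$ this shows that $(-1)^l q^{E(n,k,l)}q^{\,n-2l}$ is common as well, after which the recurrence collapses to
\begin{equation*}
\left[\begin{array}{c} k+1 \\ k+1-l \end{array}\right]_{q^2} = (q^2)^{\,l}\left[\begin{array}{c} k \\ k-l \end{array}\right]_{q^2} + \left[\begin{array}{c} k \\ k-l+1 \end{array}\right]_{q^2},
\end{equation*}
which is exactly the $q$-Pascal recurrence with base $q^2$. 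I expect the main obstacle to be purely the careful tracking of the powers of $q$ and the telescoping of the $d$-factors in the second term; once the common factors are correctly extracted the statement follows immediately from the $q$-binomial recurrence, and no genuinely new idea or estimate is needed. As a sanity check one may note that the generating-function route, inserting $x\mapsto q^k x$ into the stated generating function and using $(q^k xt;q)_\infty=(xt;q)_\infty/(xt;q)_k$, would force a re-expansion of $x^j H_m(x)$ via the three-term recurrence and is markedly messier, which is why the inductive argument is to be preferred.
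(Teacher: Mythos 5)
Your proof is correct, and it rests on the same engine as the paper's: induction on the shift index $k$ driven by the two-step recurrence (\ref{5.4b}), with the $q^2$-binomial Pascal rule doing the final work. The difference is in execution. The paper unrolls the induction into an intermediate ``combinatorial formula'' whose coefficient is a sum $\sum_{p\in\mathcal{P}_l^{k-l}}q^{\sum_i p_i(n-2i+2)}$ over compositions, then interprets that sum as a lattice-path generating function $\hat f_k^l$, derives the recurrence $q^{2k}\hat f_k^{l-1}+\hat f_{k-1}^l=\hat f_k^l$, and solves it to recover the $q^2$-binomial coefficient. You instead verify the closed-form coefficient $C(n,k,l)$ directly against the one-step scalar recurrence $C(n,k+1,l)=q^nC(n,k,l)-q^{-n+1}d_nd_{n-1}C(n-2,k,l-1)$, which after the (correct) telescoping of the $d$-products and the exponent bookkeeping $E(n,k+1,l)=E(n,k,l)+(n-2l)$, $E(n-2,k,l-1)=E(n,k,l)+(n-2l)+(n-1)$ collapses to the $q^2$-Pascal identity $\bigl[\begin{smallmatrix}k+1\\k+1-l\end{smallmatrix}\bigr]_{q^2}=q^{2l}\bigl[\begin{smallmatrix}k\\k-l\end{smallmatrix}\bigr]_{q^2}+\bigl[\begin{smallmatrix}k\\k-l+1\end{smallmatrix}\bigr]_{q^2}$; I have checked these computations and they are right, as is your treatment of the boundary terms via the vanishing convention. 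Your route is shorter and avoids introducing the auxiliary combinatorial object, at the cost of having to know the answer in advance; the paper's route is constructive in that the composition sum arises naturally from iterating (\ref{5.4b}) and the lattice-path picture explains where the $q^2$-binomial comes from.
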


\begin{proof}
By induction using (\ref{5.4b}), a combinatorial formula for $H_n(q^k x)$ is
\begin{align*}
H_n(q^k x)=\sum_{l=0}^{\min(k,\left[
\frac{n}{2}
\right])}\left((-1)^{l}q^{l(l-n)}\sum_{p\in\mathcal{P}_l^{k-l}}q^{\sum_{i=1}^l p_i(n-2i+2)}\prod_{i=0}^{2l-1}d_{n-i}\right)H_{n-2l}(x),
\end{align*}
where $ \mathcal{P}_{l}^m $ denotes all the $ l $-partitions $ (p_1,p_2,\dots,p_l) $ of $m$ such that $ \sum_{i=1}^l p_i=m $. This summation can be interpreted as a generating function for paths between two fixed points in the $ \mathbb{Z}^2 $ lattice, which is explained below. Denote 
\begin{align*}
 f_{k,n}^l=\sum_{p\in\mathcal{P}_{l}^{k}}q^{\sum_{i=1}^l p_i(n-2i+2)}:=q^{k(n-2l+2)}\hat{f}_k^l
\end{align*}
and consider the lattice $ \mathbb{Z}\times 2\mathbb{Z} $ with edges pointing rightward and downward, one can view $ \hat{f}_k^l $ as the generating function of paths from the point $ (0,2(l-1)) $ to the point $ (k,0) $, and one obtains the  recurrence relation
\begin{align*}
	q^{2k}\hat{f}_k^{l-1}+\hat{f}^l_{k-1}=\hat{f}_k^l,\qquad
	\hat{f}_1^1=1,\quad \hat{f}_1^2=1+q^2,\quad \hat{f}_2^1=1.
\end{align*}
The recurrence relation is solved in terms of the $q$-binomial formula and
\begin{align*}
	\hat{f}^l_k=\begin{aligned}
		\left[\begin{array}{c}
			l-1+k\\
			k
		\end{array}\right]
	\end{aligned}_{q^2},
\end{align*}
 thus completing the proof.
\end{proof}

We are now in a position to deduce an evaluation of the moments of the normalised measure $ H_N^2(x)
\, d_q\mu$.

\begin{proposition}
Define
\begin{align}\label{26.0}
\tilde{M}_{k,N} :=\int_{-1}^1 \prod_{i=0}^{k-1}(1-q^{-2i}x^2) 
 H_N^2(x) 
\, d_q\mu,
\qquad 
\tilde{m}_{2p,N}:=\int_{-1}^1 x^{2p} 
 H_N^2(x)
\, d_q\mu.
\end{align}
One has
\begin{equation}\label{26.1}
\tilde{M}_{k,N}
= q^k \sum_{l=0}^{k} q^{2kN+l(4l-4k-2N-1)}\begin{aligned}
		\left[\begin{array}{c}
			k\\
			k-l
		\end{array}\right]
	\end{aligned}_{q^2}^2\frac{(q;q)_N}{(q;q)_{N-2 l}}
	\end{equation}
and
\begin{equation}\label{26.2}	
\tilde{m}_{2p,N}=\sum_{i=0}^p (-1)^i q^{i(i-1)}\left[
\begin{array}{c}
p\\
i
\end{array}
\right]_{q^{2}} \tilde{M}_{i,N}.
\end{equation}
\end{proposition}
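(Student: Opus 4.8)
The plan is to reduce both identities to tools already established: formula (\ref{26.1}) to the $q$-integration by parts consequence (\ref{ibp3}) combined with the expansion of $H_N(q^kx)$ proved just above, and formula (\ref{26.2}) to a single polynomial identity in $x^2$. For (\ref{26.1}) I would first write $g_k(x):=\prod_{i=0}^{k-1}(1-q^{-2i}x^2)$ and observe the factorization $g_k(x)=(1-x^2)\,g_{k-1}(q^{-1}x)$, since shifting $x\mapsto q^{-1}x$ inside $g_{k-1}$ reindexes the product to $\prod_{i=1}^{k-1}(1-q^{-2i}x^2)$. Specializing (\ref{ibp3}) to the discrete $q$-Hermite Pearson pair (\ref{pp-dqH}), for which $\sigma(x)=x^2-1$ and $T(x)=-1$, gives $\int_{-1}^1(1-x^2)f(x)\,d_q\mu=q\int_{-1}^1 f(qx)\,d_q\mu$. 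Applying this with $f(x)=g_{k-1}(q^{-1}x)H_N^2(q^{j}x)$, and noting $f(qx)=g_{k-1}(x)H_N^2(q^{j+1}x)$, yields the recursion $\tilde{M}^{(j)}_{k,N}=q\,\tilde{M}^{(j+1)}_{k-1,N}$ for the shifted integrals $\tilde{M}^{(j)}_{k,N}:=\int_{-1}^1 g_k(x)H_N^2(q^{j}x)\,d_q\mu$. Iterating $k$ times collapses the whole product and leaves
\[
\tilde{M}_{k,N}=q^{k}\int_{-1}^1 H_N^2(q^{k}x)\,d_q\mu.
\]

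Next I would substitute the established expansion of $H_N(q^kx)$ into $H_N^2(q^kx)$ and integrate term by term; orthonormality of $\{H_m\}$ annihilates every cross term and leaves the diagonal sum over $l$ of the squared coefficients. The one nonroutine simplification is the square of the product of $d$'s: with $d_n=q^{(n-1)/2}(1-q^n)^{1/2}$ one gets $\big(\prod_{i=0}^{2l-1}d_{N-i}\big)^2=q^{2lN-l-2l^2}\,(q;q)_N/(q;q)_{N-2l}$. Adding the $q$-exponent $2[(k-l)(N-2l)+l(l-N)]$ coming from squaring the prefactor to this $q^{2lN-l-2l^2}$ gives exactly $2kN+l(4l-4k-2N-1)$, while the $q^2$-binomial squares to $\left[\begin{array}{c}k\\k-l\end{array}\right]_{q^2}^2$ and the overall factor $q^k$ is carried along from the iteration; this reproduces (\ref{26.1}). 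Terms with $2l>N$ vanish automatically since then $(q;q)_N/(q;q)_{N-2l}$ contains a zero factor, which is precisely why the displayed sum may be written up to $l=k$.

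Finally, (\ref{26.2}) follows by integrating the polynomial identity
\[
x^{2p}=\sum_{i=0}^{p}(-1)^i q^{i(i-1)}\left[\begin{array}{c}p\\i\end{array}\right]_{q^2}\prod_{j=0}^{i-1}(1-q^{-2j}x^2)
\]
against $H_N^2(x)\,d_q\mu$, upon recognizing the $i$-th product as $g_i(x)$ so that $\int_{-1}^1 g_i(x)H_N^2(x)\,d_q\mu=\tilde{M}_{i,N}$. This is a $q$-binomial inversion in the variable $y=x^2$ with base $q^2$: the monomials $y^p$ and the products $\prod_{j=0}^{i-1}(1-q^{-2j}y)$ are two triangular bases for polynomials of degree $\le p$ in $y$, and the stated transition coefficients are pinned down either by induction on $p$ or by expanding each product via the finite $q$-binomial theorem \cite[Eq.~17.2.35]{DLMF} and converting the base-$q^{-2}$ coefficients through (\ref{24c}).

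The main obstacle I expect is the exponent bookkeeping in the second step, where several quadratic-in-$l$ contributions (from squaring the explicit $q$-power, from the $q^2$-binomial coefficient, and from the product of $d$'s) must be tracked simultaneously so as to land exactly on $l(4l-4k-2N-1)$; one must also verify that $0\le l\le k$ is the correct effective summation range after the vanishing of the high terms. The inversion identity behind (\ref{26.2}) is standard but its coefficients must be determined precisely, not merely up to normalization, so confirming the sign $(-1)^i$ and the prefactor $q^{i(i-1)}$ is the other point requiring care.
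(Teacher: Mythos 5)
Your proposal is correct and follows essentially the same route as the paper: iterate the specialisation of the $q$-integration by parts identity \eqref{ibp3} to collapse $\prod_{i=0}^{k-1}(1-q^{-2i}x^2)$ into $q^k\int_{-1}^1 H_N^2(q^kx)\,d_q\mu$, expand $H_N(q^kx)$ via the preceding linearisation formula and use orthonormality, then obtain \eqref{26.2} from the same polynomial inversion identity in $x^2$. Your exponent bookkeeping, including $\big(\prod_{i=0}^{2l-1}d_{N-i}\big)^2=q^{2lN-l-2l^2}(q;q)_N/(q;q)_{N-2l}$ and the resulting total exponent $2kN+l(4l-4k-2N-1)$, checks out.
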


\begin{proof}
From the integration by parts formula \eqref{ibp3}, one notices that for the measure $ d_q\mu $ of the discrete $q$-Hermite  polynomials, 
\begin{align*}
q\int_{-1}^1 f(qx)d_q\mu=\int_{-1}^1 (1-x^2)f(x)d_q\mu.
\end{align*}
Hence
\begin{multline}
	\int_{-1}^1\prod_{i=0}^{k-1}(1-q^{-2i}x^2)H_N^2(x)d_q\mu=q^k\int_{-1}^1H_N^2(q^kx)d_q\mu\\
	=q^k\sum_{l=0}^{\min(k,[\frac{N}{2}])}q^{2(k-l)(N-2l)+2l(l-N)}\begin{aligned}
		\left[\begin{array}{c}
			k\\
			k-l
		\end{array}\right]
	\end{aligned}_{q^2}^2\prod_{i=0}^{2l-1}d^2_{N-i},
\end{multline}
which when substituted in the definition of $\tilde{M}_{k,N}$ from (\ref{26.0}), 
and upon recalling the definition of $d_n$ from (\ref{5.4a}), gives (\ref{26.1}).

By recognising the fact that
\begin{align*}
x^{2p}=\sum_{i=0}^p (-1)^i q^{i(i-1)}\left[
\begin{array}{c}
p\\
i
\end{array}
\right]_{q^{2}}\prod_{j=0}^{i-1}(1-q^{-2j}x^2),
\end{align*}
we see from the definition of $\tilde{m}_{2p,N}$ in (\ref{26.0}) that
the evaluation (\ref{26.2}) follows as a corollary of (\ref{26.1}).
\end{proof}

\begin{remark}
From the definition of the spectral density moments $m_{2k,N}^{(d\mbox{-}qH)}$ implied by substituting (\ref{1.2c}) in
(\ref{1.1c}), and the definition of $\tilde{m}_{2p}$ from (\ref{26.0}) we have 
\begin{equation}\label{5.9a}
\tilde{m}_{2p,N} = m_{2p,N+1}^{(d\mbox{-}qH)} - m_{2p,N}^{(d\mbox{-}qH)}.
\end{equation}
Substituting for the RHS according to (\ref{30e}) gives a formula for $\tilde{m}_{2p,N}$ involving the difference of
two single sums rather than a double sum as in (\ref{26.2}). For low orders we have used computer algebra to check that both expressions
are  identical, but do not have a general proof.

The expression for $\tilde{m}_{2p,N}$ as implied by (\ref{5.9a}) is already reported in \cite[Th.~5]{PV14}. Previously, using
combinatorial methods based on a rook placement interpretation of $\{ \tilde{m}_{2p,N}  \}$ yet another distinct expression
was obtained \cite[Eq.~(5.26)]{Wi12}.

Another relevant point is that $q$-difference equation  \eqref{qlaplace} for the $q$-Laplace transform implies a
corresponding linear recurrence for the moments, analogous to what is known in the continuous classical cases
\cite{ledoux04}.
\end{remark}

\subsection*{Acknowledgements}
	The work of PJF is part of the program of study supported
	by the Australian Research Council Centre of Excellence ACEMS
	and the Discovery Project grant DP210102887. SHL and GYF are supported by National Natural Science Foundation of China with grants NSFC12175155, NSFC11871336.

\appendix
\section*{Appendix}\label{appendix}

Here we  rederive known differential equations satisfied by the Laplace transform of the un-normalized measure $(P_N(x))^2 d \mu$
in the classical continuous cases using the Pearson pair formalism
of Section \ref{S3.2}. The classical continuous cases correspond to the weights (\ref{1.1b}), distinguished by having all moments
finite, and the Cauchy weight
$
\left(1+x^{2}\right)^{-a}$, $a > 1/2$,
where the number of finite moments depends on $a$.
The corresponding Pearson pairs are
\begin{align*}
(\sigma,\tau)=\left\{\begin{array}{ll}
(1,-2x)&\text{Hermite case,}\\
(x,a+1-x)&\text{Laguerre case,}\\
(x(1-x),a+1-(a+b+2)x)&\text{Jacobi case,}\\
(1+x^2,2(1-a)x)&\text{Cauchy case}
\end{array}
\right.
\end{align*}
According to the theory of Section \ref{S3.2}, these imply the differential
operators
\begin{align*}
&\mathbb{A}_H=-2\frac{d}{ds},\quad\mathbb{A}^{(1)}_H=-2,\quad\mathbb{B}_H=1,\quad\mathbb{B}^{(1)}_H=\mathbb{B}^{(2)}_H=0,\quad \lambda_N^{(H)}=2N,\\
&\mathbb{A}_L=a+1-\frac{d}{ds},\quad\mathbb{A}^{(1)}_L=-1,\quad\mathbb{B}_L=\frac{d}{ds},\quad\mathbb{B}_L^{(1)}=1,\quad\mathbb{B}_L^{(2)}=0,\quad \lambda_N^{(L)}=N,\\
&\mathbb{A}_J=a+1-(a+b+2)\frac{d}{ds},\quad\mathbb{A}_J^{(1)}=-a-b-2,\quad\mathbb{B}_J=\frac{d}{ds}-\frac{d^2}{ds^2},\quad\mathbb{B}_J^{(1)}=1-2\frac{d}{ds},\\
&\qquad\mathbb{B}_J^{(2)}=-2,\quad \lambda_N^{(J)}=N(N+a+b+1),\\
&\mathbb{A}_C=2(1-a)\frac{d}{ds},\quad\mathbb{A}_C^{(1)}=2(1-a),\quad\mathbb{B}_C=1+\frac{d^2}{ds^2},\quad\mathbb{B}_C^{(1)}=2\frac{d}{ds},\quad\mathbb{B}_C^{(2)}=2,\quad\\
&\qquad \lambda_N^{(C)}=-N(N-2a+1)
\end{align*}
in the four cases respectively.
Application of Proposiiton \ref{P3.1} then gives the
differential equations of the Laplace transforms of the un-normalized measure.
These are  
\begin{align*}	
4s\Phi^{''}+4\Phi^{'}-(s^3+(8N+4)s)\Phi=0,
\end{align*}
in Hermite case
\begin{align*}	
	(s^3-s)\Phi^{''}+(3s^2+2(2N+a+1)s-1)\Phi^{'}+(s(1-a^2)+a+1+2N)\Phi=0,
\end{align*}
in the  Laguerre case
\begin{align*}
&s^3\Phi^{(4)}+(6s^2-2s^3)\Phi^{'''}+\left(s^3-9s^2+[6-4N(N+a+b+1)-(a+b)(a+b+2)]s\right)\Phi^{''}\\
	&+\left(3s^2+[4N(N+a+b+1)+2(a+1)(a+b)-6]s-(a+b)(a+b+4)-4N(N+a+b+1)\right)\Phi^{'}\\
	&+\left((1-a^2)s+(a+b)(a+1)+2N(N+a+b+1)\right)\Phi=0
\end{align*}
in the Jacobi case, and
\begin{align*}
	\lambda^3\Phi^{(4)}&+6\lambda^2\Phi^{'''}+\left(2\lambda^3+[4a(1-a)+6-4N(N-2a+1)]\lambda\right)\Phi^{''}\\
	&+\left(6\lambda^2+4[a(1-a)-N(N-2a+1)]\right)\Phi^{'}+\left(\lambda^3+[2-4N(N-2a+1)+4a]\lambda\right)\Phi=0,
\end{align*}
in the Cauchy case. 
The first three cases was considered in \cite{ledoux04} and the Cauchy one was recently shown in \cite{assiotis21}; see also \cite{FR21} in relation to the latter.

\end{document}